\documentclass[accepted]{uai2026} 
                        
\usepackage{graphicx, subfig}
\usepackage[american]{babel}
\usepackage[colorinlistoftodos]{todonotes}
\usepackage{natbib} 
\usepackage{mathtools} 
\usepackage{booktabs} 
\usepackage{tikz} 
\usepackage{amsfonts}
\usepackage{amssymb}
\usepackage{amsthm}
\usepackage{todonotes}
\usepackage{algorithm}     
\usepackage{algpseudocode}   
\usepackage{hyperref}
\newtheorem{proposition}{Proposition}
\newtheorem{definition}{Definition}[section]
\definecolor{men}{RGB}{42, 82, 152}      
\definecolor{women}{RGB}{181, 54, 58}    
\definecolor{cross}{RGB}{90, 122, 74}    
\usepackage[most]{tcolorbox}
\usepackage{adjustbox}

\newlength{\subcolumnwidth}
\newenvironment{subcolumns}[1][0.45\columnwidth]
 {\valign\bgroup\hsize=#1\setlength{\subcolumnwidth}{\hsize}\vfil##\vfil\cr}
 {\crcr\egroup}
\newcommand{\nextsubcolumn}[1][]{%
  \cr\noalign{\hfill}
  \if\relax\detokenize{#1}\relax\else\hsize=#1\setlength{\subcolumnwidth}{\hsize}\fi
}

\title{How Predicted Links Influence Network Evolution:\\Disentangling Choice and Algorithmic Feedback in Dynamic Graphs}

\author[1]{\href{mailto:<mathilde.perez@telecom-paris.fr>?Subject=Your UAI 2026 paper}{Mathilde Perez}{}}
\author[2]{Raphaël Romero}
\author[2]{Jefrey Lijffijt}
\author[1]{Charlotte Laclau}
\affil[1]{%
    LTCI\\
    Télécom Paris\\
    Institut Polytechnique de Paris\\
    France.\\
}
\affil[2]{%
    IDLab \\
    Ghent University \\
    Belgium.
}

\begin{document}
\maketitle

\begin{abstract}
Link prediction models are increasingly used to recommend interactions in evolving networks, yet their impact on network structure is typically assessed from static snapshots. In particular, observed homophily conflates intrinsic interaction tendencies with amplification effects induced by network dynamics and algorithmic feedback. We propose a temporal framework based on multivariate Hawkes processes that disentangles these two sources and introduce an instantaneous bias measure derived from interaction intensities, capturing current reinforcement dynamics beyond cumulative metrics. We provide a theoretical characterization of the stability and convergence of the induced dynamics, and experiments show that the proposed measure reliably reflects algorithmic feedback effects across different link prediction strategies.
\end{abstract}

\section{Introduction}\label{sec:intro}
Social networks are known to be homophilic, meaning that \textit{similar} individuals are more likely to be connected \cite{mcphersonBirdsFeatherHomophily2001}. 
Following \cite{Kossinets2009}, this homophily can be decomposed into two distinct components, namely \emph{choice homophily} and \emph{induced homophily}. Choice homophily refers to the tendency of individuals to form or maintain ties as a function of their own characteristics, such as demographic attributes, interests, values, or social status, independently of the existing network structure. This tendency does not necessarily imply a conscious or deliberate preference for similarity; it may arise from latent preferences, social dispositions, or constraints associated with individuals themselves. For this reason, choice homophily is typically viewed as intrinsic to individuals. 
Induced homophily, on the other hand, arises from the structure and dynamics of the system in which interactions take place. It is driven by factors such as network topology, time and order of interactions, or algorithmic mechanisms that shape visibility and opportunities for interaction. In practice, network structure and algorithmic intervention tend to reinforce homophily, even when individual preferences remain stable. Local structural effects \citep{asikainen2020cumulative, granovetter1973strength, currarini2009economic} and repeated exposure mechanisms \citep{chaney2018algorithmic, pariser2011filter, flaxman2016filter} can amplify existing patterns of connections, leading to increasingly segregated interaction patterns. 

Beyond its sociological interpretation, this distinction between choice and induced homophily has important implications for how learning algorithms on graphs are evaluated and designed, yet it is rarely made explicit in current graph learning pipelines. In link prediction (LP) and node classification, models are trained to exploit the connectivity patterns present in the observed graph. When interaction patterns are segregated, learning algorithms tend to reproduce and sometimes amplify these patterns. Homophily is therefore often used as a proxy for bias in graph-based learning. This perspective is also reflected in the literature, where many debiasing approaches, whether based on graph rewiring \citep{laclau2020optimal, spinelli2021}, the use of explicit priors \citep{buyl2020}, or on Graph Neural Networks (GNNs) \citep[for instance]{zheng2024, zhang2025}, explicitly aim to reduce homophily with respect to a given \textit{sensitive} attribute. 

In this paper, we study how separating intrinsic and induced sources of homophily matters for the design and evaluation of (fair) link prediction models. We believe that this distinction is important to reason about observed disparities and potential sources of unfairness in graph-based learning. 
To proceed, we move beyond a static view of the graph and consider its temporal evolution. Indeed, real-world networks are dynamic, and link prediction models directly affect the network's future structure by shaping which connections are formed. The need to adopt this dynamic perspective also stems from several recent studies that explore the long-term impact of fairness in dynamic settings, using simulated environments \citep{damour2020fairness, 10.5555/3666122.3668529, rateike2024designing} or the repeated empirical risk minimization framework \citep{liu2018delayed, hashimoto2018fairness}. These works emphasize that accounting for feedback mechanisms is crucial and that fairness interventions can have delayed and counterintuitive effects over time. 
Next, we detail the most related contributions, which are at the intersection of fairness and dynamic graphs. 

\paragraph{Related work}
\cite{asikainen2020cumulative} and \cite{Vega-Oliveros2019} are the studies closest to our setting, as both analyze the interaction between algorithmic mechanisms and network structure in dynamic graphs. \cite{asikainen2020cumulative} show that weak intrinsic preferences can be amplified by endogenous processes such as triadic closure, leading to cumulative segregation. We share the distinction between intrinsic and induced homophily and a dynamic viewpoint, but depart from their framework by modeling interactions with Hawkes processes and explicitly integrating link prediction (LP) into the network evolution, enabling a fairness-oriented analysis of algorithmic mediation. \cite{Vega-Oliveros2019} study how different link recommendation strategies affect information diffusion and induce topological changes in the network, under various diffusion models. Their analysis is primarily empirical and does not explicitly model homophily or group-level disparities.

From a complementary angle, closely related work in the field of recommender systems \citep{akpinar2022longterm, fabbri2022exposure, cao2024recommendation} studies fairness in terms of exposure and visibility over time, highlighting the joint role of homophily, group size imbalance, and algorithmic feedback. While these approaches emphasize the importance of temporal effects, they typically abstract away the dynamics of interactions within the network itself, or rely on algorithm-dependent notions of fairness. 

Finally, several works leverage Hawkes processes to model temporal interactions in dynamic graphs and to learn time-aware node representations \citep{pmlr-v129-liu20a, wang2025-HawkesGNN}. Although these methods focus on representation learning rather than on homophily or fairness, they further support the relevance of Hawkes-type models for capturing the dynamics of evolving networks.

\paragraph{Contributions.}
Our work is guided by the following central question: \emph{how do intrinsic interaction and algorithmic feedback jointly shape the long-term evolution of homophily and structural disparities in dynamic networks?}\\
To address this question, we make the following contributions.
(i) We propose a Hawkes-process-based model of network evolution that explicitly separates baseline interaction propensities from reinforcement effects driven by past interactions and exposure, enabling a principled distinction between choice and induced homophily.
(ii) We provide a tractable mean-field characterization of the temporal dynamics of homophily and group-level disparities, allowing us to analyze stability and long-term behaviors under different reinforcement regimes and algorithmic feedback mechanisms.
(iii) We study the short- and long-term impact of fairness-aware LP strategies on network structure, highlighting how feedback loops can induce delayed effects on homophily and structural disparities.
These contributions are reflected in the structure of the paper, from modeling to theory and empirical evaluation.


\section{Motivation and Background} 
For clarity, we first review the general setting of (fair) LP and its connection to homophily in graphs. We then briefly introduce the general framework of Hawkes processes on graphs, which forms the basis of our model.

\subsection{Fairness on Graphs and Homophily}\label{sec:back-fairgraph}

\paragraph{Link prediction and dyadic fairness.}
We consider a graph $\mathcal{G} = (\mathcal{V}, \mathcal{E})$ where each node $u \in \mathcal{V}$ has a sensitive attribute $S_u \in \mathbb{S}$ indicating its demographic group (e.g., $\mathbb{S} = \{0, \ldots, K\}$, where $K$ is the number of groups). The LP task consists in learning a function $h: \mathcal{V} \times \mathcal{V} \to [0,1]$ that estimates the probability of an edge between node pairs.


Group fairness at the interaction level, commonly referred to as \emph{dyadic fairness} \citep{li2021dyadic}, aims to ensure balanced treatment of within-group and cross-group interactions. A standard metric is the demographic parity gap:
\begin{align*}
\Delta_{\text{DP}} = \big|\mathbb{P}&(h(u,v) = 1 \mid S_u = S_v) \\
&- \mathbb{P}(h(u,v) = 1 \mid S_u \neq S_v)\big|,
\end{align*}
where $S_u$ and $S_v$ denote the groups of nodes $u$ and $v$. This metric directly quantifies disparities in predicted link rates between same-group and cross-group pairs.

\paragraph{Homophily as a proxy for bias.}
Dyadic fairness metrics are closely related to homophily with respect to the sensitive attribute $S$. While there is no consensus on how to measure homophily \citep{berryEstimatingHomophilySocial2021}, various structural measures have been proposed. Assortativity \citep{Newman_2002} captures the tendency of nodes to connect to others in the same group, for example.
Since LP models are trained to reproduce observed connectivity patterns, homophilic structures naturally lead algorithms to favor within-group predictions. For this reason, homophily is the main driver of biased LP \citep{marey2026topofair}, and many bias mitigation strategies explicitly target homophily reduction \citep{buyl2020, spinelli2021, zheng2024, zhang2025}.


\paragraph{Limitations of static measures.}
Both homophily measures and fairness metrics like demographic parity are static snapshots that do not distinguish choice homophily (intrinsic preferences) from induced homophily (amplification through network feedback and algorithmic recommendations \citep{asikainen2020cumulative, chaney2018algorithmic}). But we argue that this distinction matters: an algorithm may satisfy a certain level of demographic parity initially yet amplify bias as the network evolves, or a fairness intervention may require extended observation before showing measurable effects. This motivates us to model the temporal evolution of homophily to separate intrinsic tendencies from feedback-driven dynamics.

\subsection{Hawkes Processes on Graphs}
Hawkes processes are a class of self-exciting temporal point processes that can be used to model interaction dynamics in evolving networks, where past events increase the likelihood of future events \citep{yang2017decoupling,huang2022mutually,Perez2025}. They are particularly well-suited to settings where interactions exhibit temporal dependence and feedback effects.

We model network dynamics as a sequence of edge activation events. Each event occurs at time $t \ge 0$ and is associated with a \emph{mark} $m$ from a mark space $\mathcal{M}$, encoding the type of interaction (e.g., the pair of nodes involved). The event sequence is represented as a marked point process
\begin{equation}\label{eq:N}
    N = \sum_{\text{events}} \delta_{(t, m)}.
\end{equation}
For any interval $[s,t)$ and subset of marks $A \subseteq \mathcal{M}$, $N([s,t) \times A)$ counts the number of events of type $A$ occurring between times $s$ and $t$. We write $N(ds, m')$ for the infinitesimal increment associated with events of mark $m'$ at time $s$.


The dynamics of the process are governed by its conditional intensity function $\lambda(t,m)$, which specifies the instantaneous rate at which an event with mark $m$ occurs at time $t$, given the history of past events. For Hawkes processes, the intensity takes the form
\[
\lambda(t,m) = \mu(m) + \sum_{m' \in \mathcal{M}} \int_{[0,t)} \phi\big(m',\, t - s,\, m\big)\, N(ds, m'),
\]
where $\mu(m)$ is the base rate that captures the exogenous propensity for events and $\phi$ is an excitation kernel that models how past events influence the future rate of events.

This formulation can capture temporal clustering, endogenous feedback, and structured interaction patterns, making Hawkes processes well-suited for studying how homophily evolves under algorithmic interventions.

\section{Modeling Framework}
We present our framework for modeling the temporal dynamics of homophily while explicitly distinguishing between choice homophily and induced homophily. 

\subsection{Set Up and Notation} \label{sec:cadre-notation}

We consider a temporal graph with a fixed node set $\mathcal{V}$. To analyze fairness-related structural effects, we introduce a group structure on the nodes: each node $u \in \mathcal{V}$ belongs to a latent group $g_u \in \{1, \ldots, K\}$. These groups may represent demographic attributes (e.g., gender, ethnicity), communities or other structural properties, and may be observed or latent. In the context of fairness, groups typically correspond to the sensitive attribute introduced in Section \ref{sec:back-fairgraph}.

We also assume that the graph is undirected. Edges appear over time as instantaneous events: each edge activation is characterized by a pair of nodes $\{u,v\}$ and a timestamp. We use the notation $\{u,v\}$ to denote the unordered pair of nodes, characterized by their group memberships $\{g_u, g_v\}$. For any pair of groups $i, j \in \{1,\ldots,K\}$ with $i \leq j$, we denote by $\mathcal{E}_{ij}$ the set of edges connecting groups $i$ and $j$:
\[
\mathcal{E}_{ij} = \left\{ \{u,v\} : g_u = i, g_v = j \right\}.
\]


The collection $\{\mathcal{E}_{ij}\}_{1 \le i \le j \le K}$ partitions the edge set according to group interactions. Edges with $i = j$ are \emph{within-group edges}; edges with $i \neq j$ are \emph{cross-group edges}. Without loss of generality, this formulation enables fine-grained analysis of structural patterns and fairness disparities across all group pairs.


\subsection{Group-Structured Hawkes Process} \label{sec:hawkes_homophily}

To account for group structure, we extend the Hawkes process framework introduced previously. We model network evolution as a group-structured marked point process, where marks correspond to group pairs and each event represents an edge activation. This formulation assumes that the interaction dynamic is homogeneous between within-group node pairs and cross-group pairs.


\paragraph{Group-based marks.}
Since the graph is undirected, each edge activation event involves an unordered pair of nodes $\{u,v\}$, characterized by the group pair $(g_u,g_v)$ with $g_u\leq g_v$. We denote by $\mathcal{P} = \{(i,j): 1 \le i \le j \le K\}$ the set of all possible group pairs. Group pairs $(i,j) \in \mathcal{P}$ serve as marks, so the mark space is $\mathcal{M} = \mathcal{P}$.
For each group pair $(i,j)$, we define the associated counting process
\[
N_{ij}(t) = N\big([0,t) \times \{(i,j)\}\big),
\]
which counts the number of interaction events between nodes from groups $i$ and $j$ up to time $t$.

\paragraph{Group-pair Hawkes intensities.}
We can model the dynamics of interactions between group pairs via a multivariate Hawkes process as follows.

\begin{definition}[Group-Pair Hawkes Intensity Model]
\label{def:group_hawkes}
For each $(i,j) \in \mathcal{P}$, the conditional intensity is defined as
\begin{equation} 
\label{eq:hawkes-intensity}
\lambda_{ij}(t)p
=
\mu_{ij}
+
\sum_{(k,l) \in \mathcal{P}}
\int_{[0,t)}
\phi_{(k,l)\rightarrow(i,j)}(t-s)\,
N_{kl}(ds).
\end{equation}
Here, $\mu_{ij} \ge 0$ denotes the baseline intensity for interactions between groups $i$ and $j$, and
$\phi_{(k,l)\rightarrow(i,j)}$ is an excitation kernel describing how past interactions between groups $(k,l)$ influence future interactions between groups $(i,j)$.
\end{definition}

Within this framework, within-group interactions correspond to group pairs $(i,i)$ and cross-group interactions to pairs $(i,j)$ with $i\neq j$.
We define the aggregated within-group and cross-group counting processes as
\[
N_{\mathrm{w}}(t) = \sum_{i=1}^K N_{ii}(t),
\qquad
N_{\mathrm{c}}(t) = \sum_{1 \le i < j \le K} N_{ij}(t),
\]
with corresponding aggregated intensities
\[
\lambda_{\mathrm{w}}(t) = \sum_{i=1}^K \lambda_{ii}(t),
\qquad
\lambda_{\mathrm{c}}(t) = \sum_{1 \le i < j \le K} \lambda_{ij}(t).
\]

This formulation captures heterogeneous dynamics across group pairs while preserving interpretable global measures of within- and cross-group activity.

\paragraph{Excitation kernel specifications.}
We assume exponential excitation kernels:
\[
\phi_{(k,l)\rightarrow(i,j)}(t)
=
\alpha_{(i,j),(k,l)}(t)\, e^{-\beta t}\,\mathbf{1}_{t \ge 0},
\]
where $\beta > 0$ controls the temporal decay of influence and $\alpha_{(i,j),(k,l)}(t) \ge 0$ denotes the time-dependent excitation strength from group pair $(k,l)$ to group pair $(i,j)$. Exponential kernels are a natural choice due to their tractability and widespread use in modeling temporal effects.

More generally, we allow the excitation strength to depend on time-varying features characterizing the network state:
\[
\alpha_{(i,j),(k,l)}(t)
=
 f\!\left( X_{(i,j),(k,l)}(t) \right),
\]
where $X_{(i,j),(k,l)}(t)$ is a feature representation of the edge pair at time $t$,
and $f$ is a non-negative modulation function.
This allows structural, behavioral, or algorithmic factors to be embedded directly into the excitation dynamics. In particular, $X_{(i,j),(k,l)}(t)$ can be set to the LP score between nodes $k$ and $l$ at time $t$, computed for instance as the cosine similarity between their embeddings, thereby capturing how a LP model may amplify homophily or reinforce inter-group asymmetries over time. A detailed example is provided in Appendix~\ref{app:example}.


\subsection{Empirical vs Instantaneous bias}\label{sec:bias}
 
As presented in Section \ref{sec:back-fairgraph}, homophily is often represented as a function of the number of within- and cross-group edges. In our temporal setting, these quantities naturally correspond to counting processes. 

\begin{definition}[Empirical bias measure]
Let $N_w(t)$ and $N_c(t)$ denote respectively the cumulative number of within-group and cross-group interaction events observed up to time $t$. 
We define the empirical bias measure as
\[
B_{\mathrm{emp}}(t)
=
\frac{N_w(t)}{N_w(t) + N_c(t)}, \quad B_{\mathrm{emp}}(t) \in [0,1].
\]
\end{definition}
$B_{\mathrm{emp}}(t)$ represents the proportion of observed interactions that occur within groups at a given time $t$. Values close to $1$ indicate strong within-group concentration, while values close to $0$ correspond to predominantly cross-group interactions.
In addition, $B_{\mathrm{emp}}(t)$ is cumulative, meaning that as time increases, all past interactions up to time $t$ are equally counted, making the measure increasingly insensitive to recent changes in interaction patterns. As a result, empirical bias summarizes historical imbalance but fails to capture ongoing structural changes or emerging trends in network dynamics.


This limitation motivates assessing bias at the level of the interaction dynamics rather than accumulated outcomes. In a temporal point process framework, each counting process $N_{ij}(t)$ is characterized by a conditional intensity $\lambda_{ij}(t)$.

Therefore, we introduce an instantaneous bias measure based on the aggregated within-group and cross-group interaction intensities.

\begin{definition}[Instantaneous bias]\label{def:inst_bias}
Let $\lambda_{\mathrm{w}}(t)$ and $\lambda_{\mathrm{c}}(t)$ denote the
aggregated within-group and cross-group intensities defined in
Section~\ref{sec:cadre-notation}. We define
\[
B_{\mathrm{inst}}(t)
=
\frac{\lambda_{\mathrm{w}}(t)}
{\lambda_{\mathrm{w}}(t) + \lambda_{\mathrm{c}}(t)}, \quad B_{\mathrm{inst}}(t) \in [0,1]
\]
\end{definition}


$B_{\mathrm{inst}}(t)$ captures the current structural tendency of the system to favor intra-group interactions over inter-group interactions. In particular, $B_{\mathrm{inst}}(t)=0.5$ corresponds to the case where the number of within and cross-group interactions are equal in number; $B_{\mathrm{inst}}(t)\approx 1$ and $B_{\mathrm{inst}}(t)\approx 0$ to strongly homophilic and heterophilic dynamics, respectively.


Unlike $B_{\mathrm{emp}}(t)$, the instantaneous bias reacts immediately to changes in the interaction-generating mechanism, such as variations in excitation parameters, algorithmic exposure, or fairness interventions. In that sense, it reflects what is currently being reinforced by the system, rather than what has accumulated in the past.

One of the main benefits of working with an instantaneous notion of bias is that it naturally separates choice homophily from induced homophily. Based on the Hawkes intensity defined in Equation~\ref{eq:hawkes-intensity}, interaction dynamics can be decomposed into two components: a baseline intensity $\mu_{ij}$ capturing time-independent tendencies for interactions between groups $i$ and $j$, and excitation terms governed by $\alpha_{(k,l)\rightarrow(i,j)}$, which encode endogenous reinforcement mechanisms. By defining bias at the level of instantaneous intensities, $B_{\mathrm{inst}}(t)$ provides a direct approximation of these reinforcement dynamics, allowing us to distinguish persistent underlying trends from time-varying induced effects.

\paragraph{Illustrative Simulation}
We illustrate the difference between empirical and instantaneous bias on a simple synthetic setting, where two groups interact according to Hawkes processes with time-varying reinforcement. The interaction dynamics go through three successive regimes: an initial phase with moderate amplification, an intermediate phase with polarized activity between groups, and a final phase where excitation levels become comparable. Details on parameter estimation are provided in Section~\ref{sec:experiment}, and the simulation setup is described in Appendix~\ref{app:estimation}.

Figure~\ref{fig:bias_hawkes} compares the cumulative empirical bias $B_{\mathrm{emp}}$ with the instantaneous bias computed from the mean interaction intensities. As expected, the empirical bias evolves slowly and remains strongly influenced by past interactions, which smooths out regime changes. In contrast, the instantaneous bias reacts immediately to changes in the underlying interaction mechanism and clearly reflects the transitions between the three phases. Finally, the estimated instantaneous bias closely tracks the oracle bias computed from the true mean intensities, showing that our estimation procedure recovers the underlying dynamics despite stochastic fluctuations and finite observation windows.

\begin{figure}[t]
    \centering
    \includegraphics[width=0.48\textwidth]{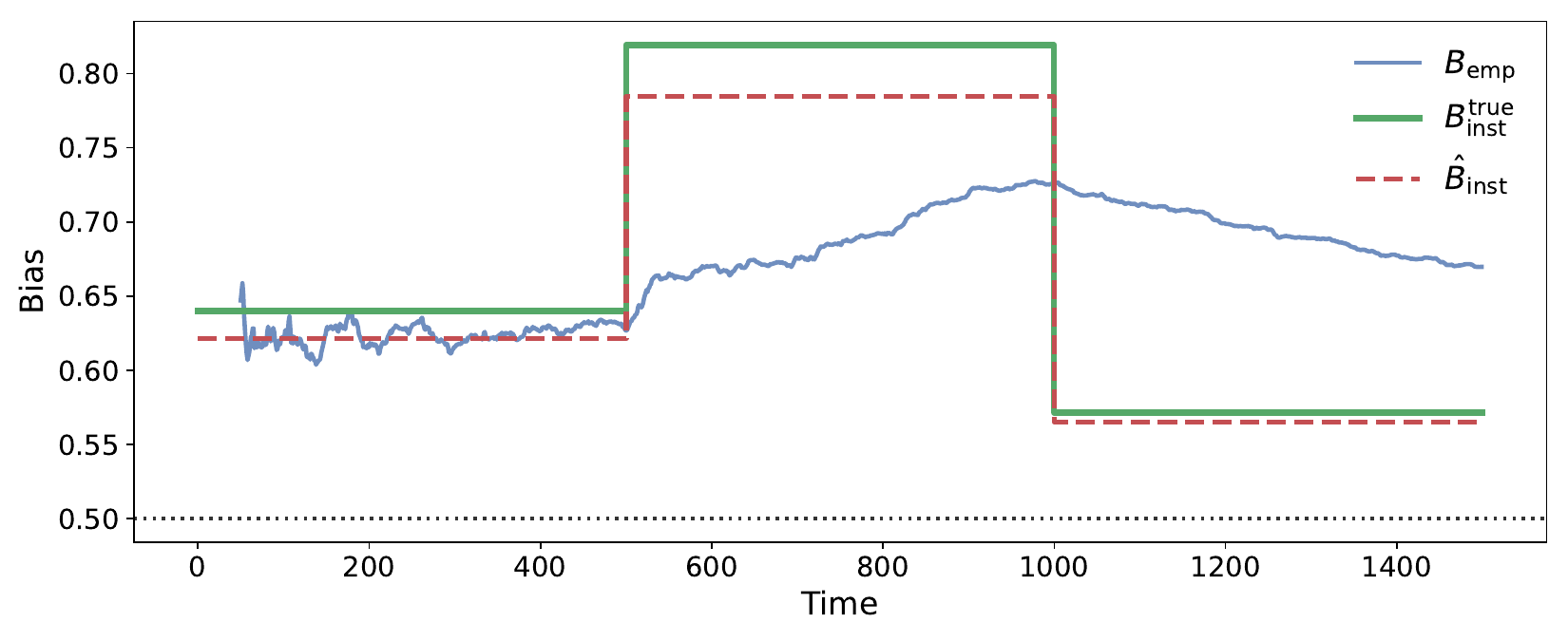}
    \caption{Empirical vs. instantaneous bias from two simulated Hawkes processes across three phases: initial moderate activity ($t<500$), intermediate polarization ($500 \leq t < 1000$), and final alignment of group excitation ($t \geq 1000$).}
    \label{fig:bias_hawkes}
\end{figure}





\section{Theoretical analysis}\label{sec:theorical}
Now, our goal is to characterize how homophily and group-level biases evolve under different interaction and reinforcement regimes, and to identify the conditions under which these dynamics stabilize, persist, or diverge.

\subsection{Mean-Field Approximation.}

We adopt a mean-field perspective \citep{Bacry2015hawkes, delattre2016} in order to obtain a deterministic description of the average interaction dynamics at the group pair level. The central idea is to approximate stochastic interaction increments by their expected behavior, which allows us to average out microscopic fluctuations while preserving the macroscopic structure of reinforcement between groups. This approximation gives rise to a tractable system that allows us to analyze the asymptotic behavior of interaction intensities and derive interpretable stability conditions.

For each group pair $g_{ij}$, with $\mathcal{H}_{t^-}$ the history up to time $t$, the counting process increment satisfies
\[
\mathbb{E}\!\left[d\mathbb{Y}_{g_{ij}}(t)\mid\mathcal{H}_{t^-}\right]
= \lambda_{g_{ij}}(t)\,dt.
\]
We therefore approximate
\[
d\mathbb{Y}_{g_{kl}}(s)
\;\approx\;
\bar{\lambda}_{g_{kl}}(s)\,ds,
\qquad
\bar{\lambda}_{g_{kl}}(t)
:=
\mathbb{E}\!\left[\lambda_{g_{kl}}(t)\right].
\]

Substituting this approximation into the Hawkes dynamics yields a closed system of deterministic integral equations.
\begin{proposition}[Group-level mean-field dynamics]Under a mean-field approximation of the multivariate Hawkes process, the expected intensity for each group pair $g_{ij}$ satisfies
\[
\bar{\lambda}_{g_{ij}}(t)
=
\mu_{g_{ij}}
+
\sum_{(k,l)\in\mathcal{N}_{g_{ij}}}
\int_0^t
\phi_{g_{ij},g_{kl}}(t-s)\,
\bar{\lambda}_{g_{kl}}(s)\,ds.
\]
\end{proposition}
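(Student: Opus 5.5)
The plan is to take expectations of the conditional intensity in Definition~\ref{def:group_hawkes} and exchange expectation with the stochastic integral. For each group pair $g_{ij}$ we start from
\[
\lambda_{g_{ij}}(t)=\mu_{g_{ij}}+\sum_{(k,l)}\int_{[0,t)}\phi_{g_{ij},g_{kl}}(t-s)\,d\mathbb{Y}_{g_{kl}}(s).
\]
The sum runs over the pairs $(k,l)$ that actually excite $(i,j)$. We write this set as $\mathcal{N}_{g_{ij}}$; pairs outside it have $\phi\equiv 0$ and contribute nothing. The baseline $\mu_{g_{ij}}$ is deterministic, so its expectation is itself. By linearity, the proof then reduces to computing
\[
\mathbb{E}\Big[\int_0^t\phi_{g_{ij},g_{kl}}(t-s)\,d\mathbb{Y}_{g_{kl}}(s)\Big]
\]
for each excitation term.

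The key step is the martingale (compensator) decomposition. Write $d\mathbb{Y}_{g_{kl}}(s)=\lambda_{g_{kl}}(s)\,ds+dM_{g_{kl}}(s)$, where $M_{g_{kl}}$ is the compensated counting martingale. This follows from the identity $\mathbb{E}[d\mathbb{Y}\mid\mathcal{H}_{s^-}]=\lambda\,ds$ stated just before the proposition. Suppose the kernel $s\mapsto\phi_{g_{ij},g_{kl}}(t-s)$ is deterministic and bounded on $[0,t]$. Then the stochastic integral against $dM$ has zero mean. By Tonelli, applicable because everything is nonnegative, we then get $\mathbb{E}\int_0^t\phi(t-s)\lambda_{g_{kl}}(s)\,ds=\int_0^t\phi(t-s)\bar\lambda_{g_{kl}}(s)\,ds$. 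Summing over $(k,l)\in\mathcal{N}_{g_{ij}}$ and adding $\mu_{g_{ij}}$ gives the stated integral equation. It is exact when the kernels are deterministic, for instance with constant $\alpha$. This is precisely the substitution $d\mathbb{Y}_{g_{kl}}(s)\approx\bar\lambda_{g_{kl}}(s)\,ds$ made before the statement. It also shows that the "approximation" introduces no error in the linear, deterministic-kernel case.

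The main obstacle is the time-varying, state-dependent excitation $\alpha_{(i,j),(k,l)}(t)=f(X(t))$. Here the kernel is random and may correlate with $\lambda_{g_{kl}}$. In that case $\mathbb{E}[\alpha(s)\lambda(s)]\neq\mathbb{E}[\alpha(s)]\,\mathbb{E}[\lambda(s)]$ in general. This is where the mean-field hypothesis genuinely enters. I would make it explicit: either $X$ is treated as a deterministic (or exogenous, independent) process, or $\alpha$ is replaced by its mean path $\bar\alpha(s)$ and the covariance term $\mathrm{Cov}(\alpha(s),\lambda(s))$ is neglected. The latter is justified in the spirit of \citep{delattre2016}: covariances vanish as the number of interacting nodes per group grows. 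Under either convention, the argument above goes through with $\phi_{g_{ij},g_{kl}}(t-s)=\bar\alpha(s)e^{-\beta(t-s)}$.

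Finally, I would check integrability, so that the expectations are finite and the exchange of expectation and integral is legitimate. On a finite horizon $[0,t]$ this follows from a Grönwall bound. Since $\bar\lambda\le\mu_{\max}+C\int_0^t\bar\lambda$ with $C$ bounding $\sum\sup\phi$, we get $\bar\lambda(t)\le\mu_{\max}e^{Ct}<\infty$. This guarantees that $\bar\lambda$ is well defined and is the unique locally bounded solution of the system.
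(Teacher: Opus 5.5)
Your proposal is correct and follows the paper's route: take expectations in the Hawkes intensity and replace $d\mathbb{Y}_{g_{kl}}(s)$ by $\bar\lambda_{g_{kl}}(s)\,ds$. Your compensator decomposition plus Tonelli makes this rigorous, and it shows the identity is exact for deterministic kernels, so a genuine mean-field closure is needed only for state-dependent excitation $\alpha(t)=f(X(t))$; this is sharper than the paper's one-line heuristic. One small fix: the zero-mean property of $\int\phi\,dM$ needs $\mathbb{E}\,\mathbb{Y}_{g_{kl}}(t)<\infty$ in advance, so run your Gr\"onwall argument on the counts stopped at the $n$-th event time and let $n\to\infty$, rather than on $\bar\lambda$ itself (the unstopped inequality is trivially satisfied by $\bar\lambda\equiv\infty$).
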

The derivation follows from taking conditional expectations in Equation \ref{eq:hawkes-intensity} and replacing stochastic increments by their mean-field approximation.

We remind that we consider a network with $K$ latent clusters and denote by 
$\lambda_{g_{ij}}(t)$ the intensity of edges between clusters $i$ and $j$ ($1 \le i \le j \le K$).  
Assuming exponential excitation kernels,
\[
\phi_{g_{ij},g_{kl}}(t) = \alpha_{g_{ij},g_{kl}}\, e^{-\beta t}, \quad \beta > 0,
\]
we can define the number of group pairs as 
$ G := \frac{K(K+1)}{2}. $

We introduce the following vector and matrix representations $ \bar{\boldsymbol{\lambda}}(t)=[\bar{\lambda}_{g_{11}}(t), \dots,\bar{\lambda}_{g_{KK}}(t) ]^T \in \mathbb{R}^G$, $\boldsymbol{\mu} = [\mu_{g_{11}}, \dots, \mu_{g_{KK}}]^T \in \mathbb{R}^G$ and $\mathbf{A}=(\alpha_{g_{ij},g_{kl}})_{(i,j),(k,l)} \in \mathbb{R}_+^{G \times G}$.
With these definitions, the mean-field dynamics of the system can be written as:
\begin{equation}\label{eq:meanfield_matrix_compact}
\bar{\boldsymbol{\lambda}}(t) = \boldsymbol{\mu} + \int_0^t \mathbf{A} \, e^{-\beta (t-s)} \, \bar{\boldsymbol{\lambda}}(s) \, ds
.
\end{equation}

This mean-field system captures the average evolution of interaction intensities across group pairs.
It is essential to note that it preserves group-specific differences in baseline tendencies and excitation forces, allowing us to analyze how structural reinforcement mechanisms shape the temporal evolution of homophily and biases.

\subsection{Mean and convergence.}
In this section, we analyze the temporal behavior of the mean intensities defined in Equation \ref{eq:meanfield_matrix_compact}. Our goal is to characterize the stationary mean of and the convergence rate toward this
equilibrium under different spectral regimes of the excitation matrix. This analysis is essential to assess the accuracy of the instantaneous bias \ref{def:inst_bias} and to understand how feedback mechanisms influence long-term homophily.

\paragraph{Stationary regime}
For a multivariate Hawkes process with exponential kernel and fixed excitation matrix $\mathbf A$, stationarity holds if $\rho( \frac{\mathbf A}{\beta}) < 1$ (\cite{BacryMastromatteoMuzy2015}).

In that case, the stationary mean intensity is given by
\[
\bar{\boldsymbol{\lambda}}^*
=
(\mathbf I-\frac{\mathbf A}{\beta})^{-1}\boldsymbol{\mu}.
\]
Additionally, from Definition \ref{def:inst_bias} we have:
\begin{equation}
    B^*_{inst}(t) = \frac{\bar{\lambda}^*_{\mathrm{w}}(t)}
{\bar{\lambda}^*_{\mathrm{w}}(t) + \bar{\lambda}^*_{\mathrm{c}}(t)}
\end{equation}
To assess the reliability of $B^*_{inst}$ as a measure of homophily, it is essential to quantify how quickly $\bar{\boldsymbol{\lambda}}(t)$ converges to its stationary value $\bar{\boldsymbol{\lambda}}^*$.
\begin{proposition}[Exponential convergence rate]\label{prop:convergence}
Assume that $\rho(\mathbf A/\beta) < 1$. 
Then there exist constants $C>0$ and $\kappa>0$ such that
\[
\|\bar{\boldsymbol{\lambda}}(t) - \bar{\boldsymbol{\lambda}}^*\|
\le C e^{-\kappa t}
\|\bar{\boldsymbol{\lambda}}(0) - \bar{\boldsymbol{\lambda}}^*\|.
\]
Moreover, the rate can be chosen such that
\[
\kappa < \beta \big(1 - \rho(\mathbf A/\beta)\big).
\]
\end{proposition}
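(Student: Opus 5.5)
The plan is to turn the mean-field integral equation \eqref{eq:meanfield_matrix_compact} into a linear ODE, and then read off the convergence rate from the spectrum of the resulting matrix. Because the kernel is exponential, I would differentiate \eqref{eq:meanfield_matrix_compact} in $t$. Differentiating $\int_0^t \mathbf A e^{-\beta(t-s)}\bar{\boldsymbol\lambda}(s)\,ds$ gives $\mathbf A\bar{\boldsymbol\lambda}(t) - \beta\int_0^t \mathbf A e^{-\beta(t-s)}\bar{\boldsymbol\lambda}(s)\,ds$. The remaining integral equals $\bar{\boldsymbol\lambda}(t)-\boldsymbol\mu$ by \eqref{eq:meanfield_matrix_compact} itself. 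Hence
\[
\frac{d}{dt}\bar{\boldsymbol\lambda}(t) = (\mathbf A-\beta\mathbf I)\,\bar{\boldsymbol\lambda}(t) + \beta\boldsymbol\mu ,
\]
with initial condition $\bar{\boldsymbol\lambda}(0)=\boldsymbol\mu$. The bound is stated for a general initial value, so I would prove it for the ODE with an arbitrary initial condition.

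Next I would identify $\bar{\boldsymbol\lambda}^*$ as the equilibrium of this ODE. Under $\rho(\mathbf A/\beta)<1$, the matrix $\mathbf I-\mathbf A/\beta$ is invertible (its eigenvalues are $1-\nu/\beta$ with $|\nu/\beta|<1$). Moreover, $\bar{\boldsymbol\lambda}^*=(\mathbf I-\mathbf A/\beta)^{-1}\boldsymbol\mu$ satisfies $(\mathbf A-\beta\mathbf I)\bar{\boldsymbol\lambda}^*+\beta\boldsymbol\mu=0$. Setting $\mathbf e(t)=\bar{\boldsymbol\lambda}(t)-\bar{\boldsymbol\lambda}^*$ gives the homogeneous system $\mathbf e'=\mathbf M\mathbf e$ with $\mathbf M:=\mathbf A-\beta\mathbf I$. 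Therefore $\mathbf e(t)=e^{\mathbf M t}\mathbf e(0)$, and it suffices to bound the operator norm $\|e^{\mathbf M t}\|$.

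For the spectral step, every eigenvalue of $\mathbf M$ has the form $\nu-\beta$ with $\nu$ an eigenvalue of $\mathbf A$. Thus $\mathrm{Re}(\nu-\beta)\le|\nu|-\beta\le\rho(\mathbf A)-\beta=-\beta\bigl(1-\rho(\mathbf A/\beta)\bigr)=:-\kappa_0<0$. Since $\mathbf A\ge 0$, Perron--Frobenius shows that $\rho(\mathbf A)$ is itself an eigenvalue, so the spectral abscissa of $\mathbf M$ is exactly $-\kappa_0$ and the bound is sharp.

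The main obstacle is that $\mathbf M$ need not be diagonalizable. Using the Jordan form $\mathbf M=\mathbf P\mathbf J\mathbf P^{-1}$, one gets $\|e^{\mathbf M t}\|\le \|\mathbf P\|\|\mathbf P^{-1}\|\,p(t)e^{-\kappa_0 t}$, where $p$ is a polynomial of degree less than the largest Jordan block size. This is why the statement only asks for $\kappa<\kappa_0$ rather than $\kappa=\kappa_0$. For any fixed $\kappa<\kappa_0$, $\sup_{t\ge0}p(t)e^{-(\kappa_0-\kappa)t}<\infty$, which yields a constant $C$ with $\|e^{\mathbf M t}\|\le Ce^{-\kappa t}$.

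Alternatively, one can avoid Jordan forms by choosing, for each $\varepsilon>0$, an induced norm in which the logarithmic norm of $\mathbf M$ is at most $-\kappa_0+\varepsilon$. Equivalence of norms in $\mathbb R^G$ then absorbs the change of norm into $C$. If $\mathbf M$ is diagonalizable, one may even take $\kappa=\kappa_0$; the strict inequality is needed only in the defective case. Combining the pieces gives $\|\bar{\boldsymbol\lambda}(t)-\bar{\boldsymbol\lambda}^*\|\le Ce^{-\kappa t}\|\bar{\boldsymbol\lambda}(0)-\bar{\boldsymbol\lambda}^*\|$ for every $0<\kappa<\beta(1-\rho(\mathbf A/\beta))$.
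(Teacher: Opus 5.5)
Your proposal is correct and follows essentially the same route as the paper. Both differentiate the exponential-kernel integral equation into the linear ODE $\dot{\bar{\boldsymbol\lambda}} = (\mathbf A-\beta\mathbf I)\bar{\boldsymbol\lambda}+\beta\boldsymbol\mu$, pass to the homogeneous error equation, and bound $\|e^{(\mathbf A-\beta\mathbf I)t}\|$ through the spectrum of $\mathbf A-\beta\mathbf I$, with Perron--Frobenius identifying the spectral abscissa as $-\beta(1-\rho(\mathbf A/\beta))$. Your explicit Jordan-form treatment of the defective case, which the paper delegates to ``standard results'', also correctly shows why in general you can only claim $\kappa<\beta(1-\rho(\mathbf A/\beta))$ and not equality, a point the paper's final display glosses over.
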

The proof is given in Appendix \ref{app:proof}. Intuitively, dissipation dominates excitation, ensuring that the mean intensity rapidly stabilizes. When $\rho(\frac{\mathbf A}{\beta})\geq 1$, the system enters critical or supercritical regimes where no stationary means exist; these cases are addressed later in the paper.

\paragraph{Locally stationary regime}
In realistic settings, exposure policies evolve over time, inducing a time-dependent excitation matrix $\mathbf A(t)$.
A natural modeling assumption  is that $\mathbf A(t)$ is piecewise constant on intervals
$[\tau_k, \tau_{k+1})$:
\[
\mathbf A(t) = \mathbf A^{(k)}, \qquad t \in [\tau_k, \tau_{k+1}).
\]
On each interval, the system behaves as a multivariate Hawkes process with fixed parameters, allowing us to reason locally \citep{3680245188524578b5c1c9dc5525b52a}, which is particularly relevant for LP analysis. If convergence to equilibrium is rapid relative to the rate at which policies change, the system operates near its local equilibrium $\bar{\boldsymbol{\lambda}}^{*(k)}$, and  bias measures calculated from this equilibrium are meaningful. However, if convergence is slow, persistent transient imbalances may arise, leading to systematic exposure disparities that are not captured by steady-state analyses. Thus, the convergence rate $\kappa_k$ determines whether measures calculated from the local equilibrium accurately reflect the system's behavior, or whether transient dynamics dominate.




\begin{figure*}[tp]
\begin{subcolumns}[0.48\textwidth]
\includegraphics[width=\subcolumnwidth]{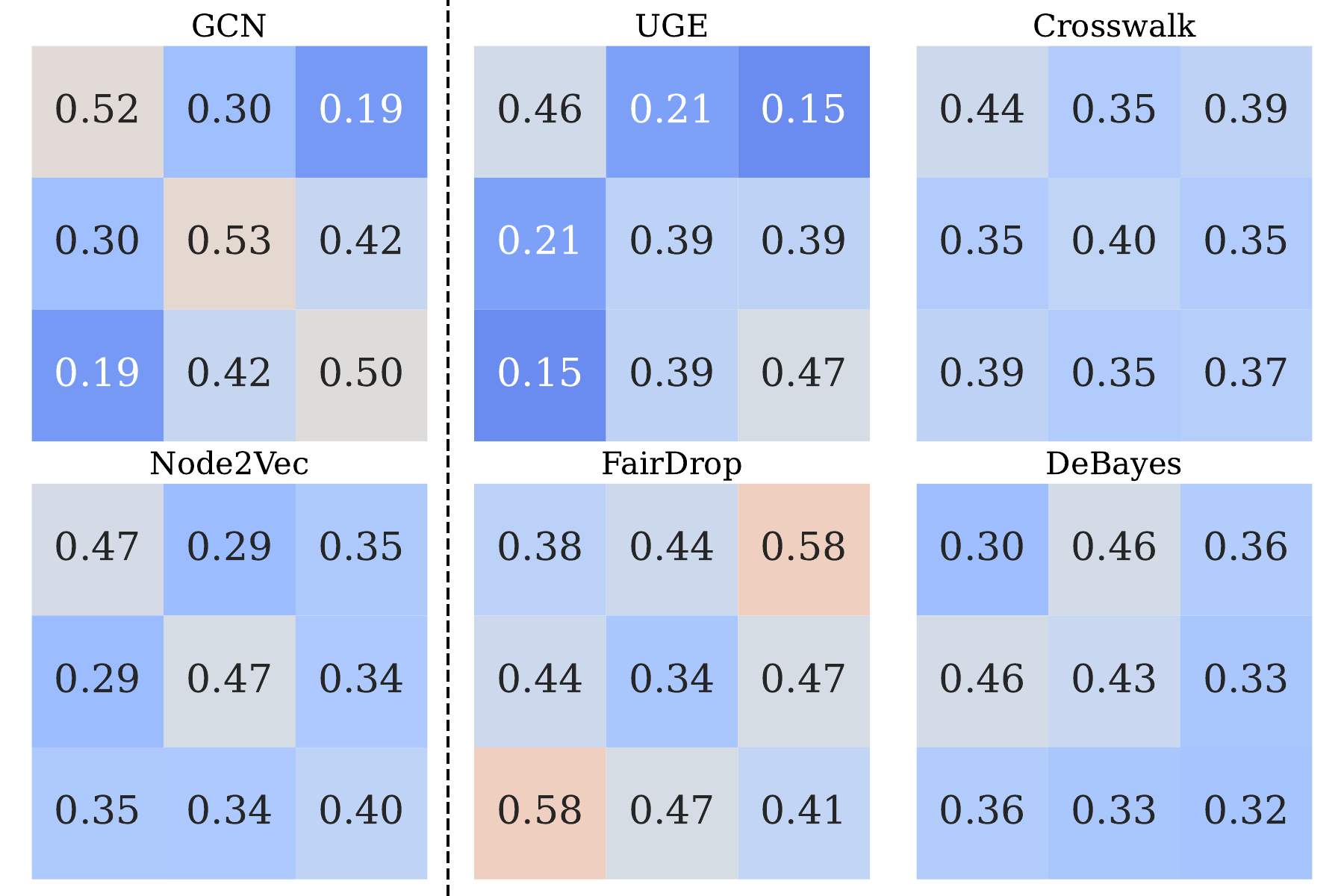}
\nextsubcolumn[0.47\textwidth]
\includegraphics[width=0.85\subcolumnwidth]{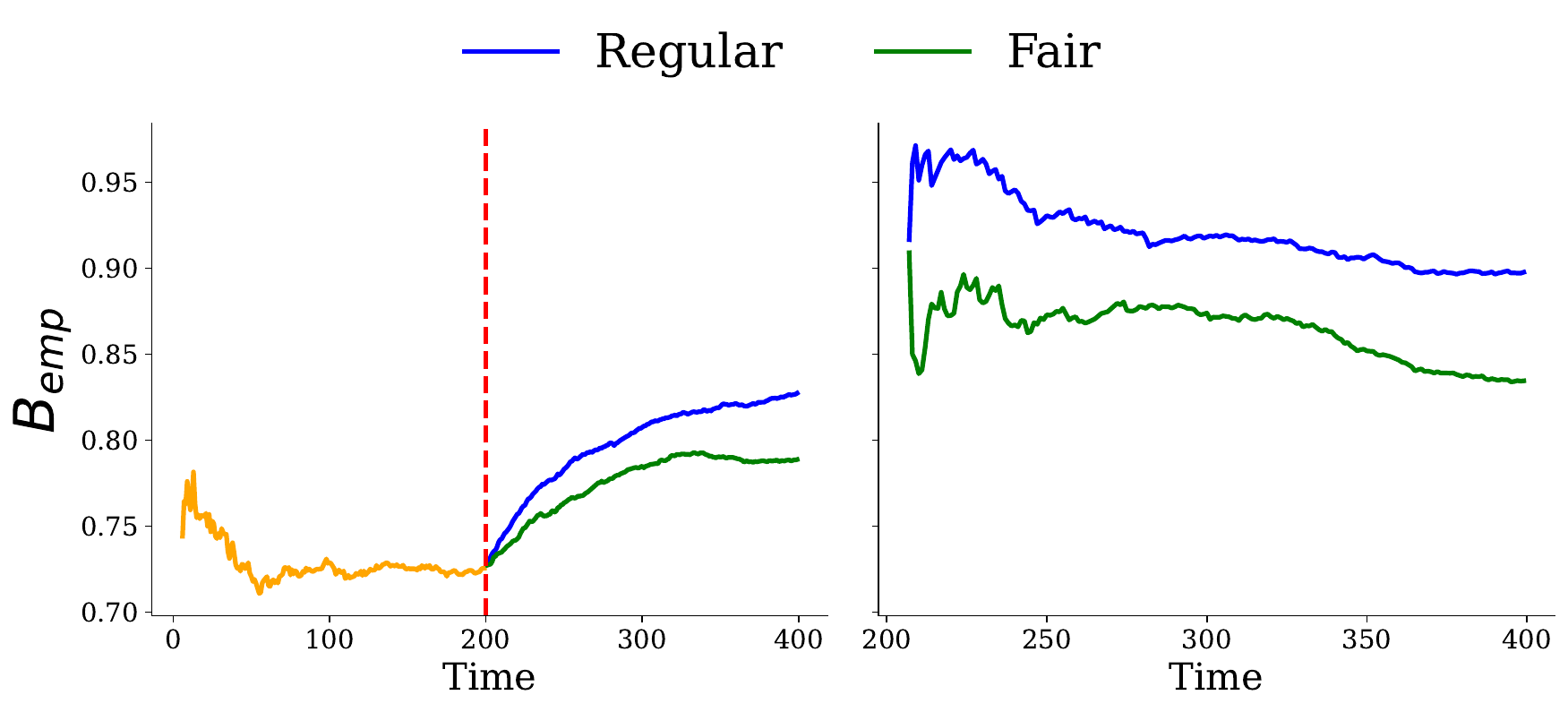}\\
\includegraphics[width=0.85\subcolumnwidth]{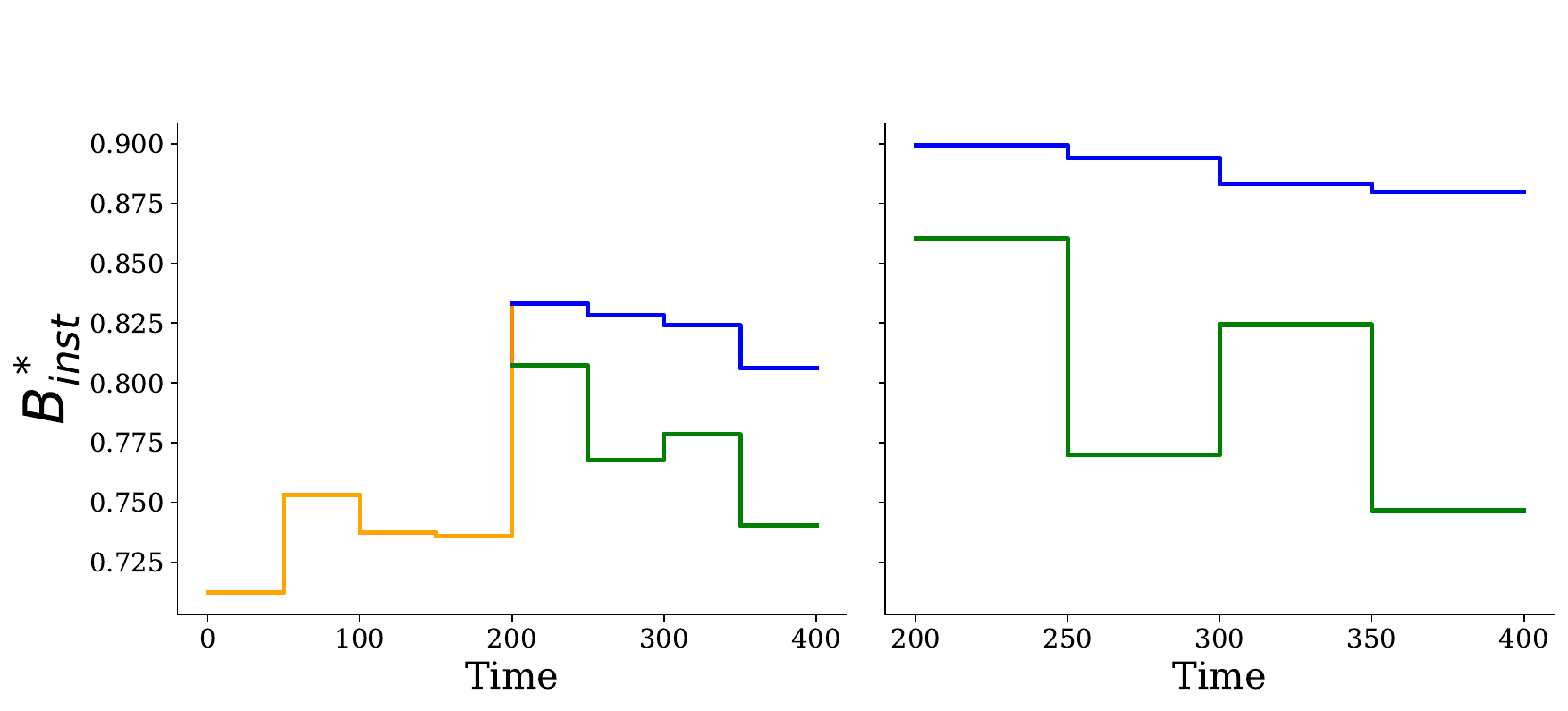}
\end{subcolumns}
\caption{Results on synthetic data. Left: Mean self-excitation coefficients $\alpha_{(i,j)}$ estimated for each model, where entry $(i,j)$ represents the estimated self-excitation of the interaction process between groups $i$ and $j$. Right-top: $B_{emp}(t)$ evolution: full timeline v.s. post-intervention phase. Right-bottom:$B^*_{inst}(t)$ evolution: full timeline v.s. only LP-intervention phase.}
\label{fig:results-synthetic}
\end{figure*}
\begin{proposition}[Exponential convergence rate - locally stationary case]
\label{prop:convergence-local}
Consider an interval $[\tau_k, \tau_{k+1})$ on which the excitation matrix is constant: $\mathbf{A}(t) = \mathbf{A}^{(k)}$. 

Assume that $\rho(\mathbf{A}^{(k)}/\beta) < 1$. Then there exist constants $C_k > 0$ and $\kappa_k > 0$ such that, for all $t \in [\tau_k, \tau_{k+1})$,
\[
\|\bar{\boldsymbol{\lambda}}(t) - \bar{\boldsymbol{\lambda}}^{*(k)}\|
\leq C_k e^{-\kappa_k (t - \tau_k)}
\|\bar{\boldsymbol{\lambda}}(\tau_k) - \bar{\boldsymbol{\lambda}}^{*(k)}\|,
\]
where $\bar{\boldsymbol{\lambda}}^{*(k)} = (\mathbf{I} - \mathbf{A}^{(k)}/\beta)^{-1}\boldsymbol{\mu}$ is the local equilibrium. Moreover, the rate can be chosen such that
\[
\kappa_k < \beta \big(1 - \rho(\mathbf{A}^{(k)}/\beta)\big).
\]
Consequently, the instantaneous bias $B_{\text{inst}}(t)$ converges to a well-defined limit within $O(\kappa_k^{-1})$ on each interval. 
\end{proposition}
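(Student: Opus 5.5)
The plan is to reduce the integral equation \eqref{eq:meanfield_matrix_compact}, restricted to $[\tau_k,\tau_{k+1})$, to a linear ODE and then argue exactly as for Proposition~\ref{prop:convergence}, only with the time origin moved to $\tau_k$. With exponential kernels the memory term
\[
\mathbf{Z}(t) := \int_0^t \mathbf{A}(s)e^{-\beta(t-s)}\bar{\boldsymbol{\lambda}}(s)\,ds
\]
satisfies $\dot{\mathbf{Z}} = -\beta \mathbf{Z} + \mathbf{A}(t)\bar{\boldsymbol{\lambda}}(t)$. Here I take the convention that the excitation strength is attached to the source time $s$; the other convention changes nothing on the interval once the past contribution is absorbed into the initial condition. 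Since $\bar{\boldsymbol{\lambda}} = \boldsymbol{\mu} + \mathbf{Z}$, on $[\tau_k,\tau_{k+1})$ we get the autonomous affine system
\[
\dot{\bar{\boldsymbol{\lambda}}}(t) = \beta\boldsymbol{\mu} - (\beta\mathbf{I} - \mathbf{A}^{(k)})\,\bar{\boldsymbol{\lambda}}(t).
\]
The whole history before $\tau_k$ enters only through the initial value $\bar{\boldsymbol{\lambda}}(\tau_k)$, which is the key point of the locally stationary reduction.

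The equilibrium of this system solves $(\beta\mathbf{I}-\mathbf{A}^{(k)})\bar{\boldsymbol{\lambda}} = \beta\boldsymbol{\mu}$. Its solution is $\bar{\boldsymbol{\lambda}}^{*(k)} = (\mathbf{I}-\mathbf{A}^{(k)}/\beta)^{-1}\boldsymbol{\mu}$, which is well defined because $\rho(\mathbf{A}^{(k)}/\beta)<1$; the Neumann series also shows it is nonnegative. The error $\mathbf{e}(t) = \bar{\boldsymbol{\lambda}}(t) - \bar{\boldsymbol{\lambda}}^{*(k)}$ then satisfies $\dot{\mathbf{e}} = -\beta(\mathbf{I}-\mathbf{A}^{(k)}/\beta)\mathbf{e}$, so
\[
\mathbf{e}(t) = \exp\!\big(-(\beta\mathbf{I}-\mathbf{A}^{(k)})(t-\tau_k)\big)\,\mathbf{e}(\tau_k).
\]
The eigenvalues of $\beta\mathbf{I}-\mathbf{A}^{(k)}$ are $\beta-\nu$ with $\nu$ ranging over the spectrum of $\mathbf{A}^{(k)}$. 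Their real parts are at least $\beta - \rho(\mathbf{A}^{(k)}) = \beta(1-\rho(\mathbf{A}^{(k)}/\beta)) =: \gamma_k > 0$.

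The main obstacle is turning this spectral bound into a norm bound, because $\mathbf{A}^{(k)}$ need not be diagonalizable. I will use the standard estimate $\|e^{-\mathbf{M}s}\| \le C_k e^{-\kappa_k s}$, valid for any $\kappa_k<\gamma_k$, where $\mathbf{M} = \beta\mathbf{I}-\mathbf{A}^{(k)}$. It follows from the Jordan decomposition: a polynomial factor $s^{m}e^{-\gamma_k s}$ is dominated by $C e^{-\kappa_k s}$. Equivalently, one can pick an induced norm in which $\|\mathbf{A}^{(k)}\|<\beta-\kappa_k$. This is exactly why the statement only asserts $\kappa_k<\beta(1-\rho(\mathbf{A}^{(k)}/\beta))$. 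Taking norms of $\mathbf{e}(t)$ gives the claimed inequality for all $t\in[\tau_k,\tau_{k+1})$.

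For the consequence about the bias, I write $\bar{\lambda}_{\mathrm{w}}$ and $\bar{\lambda}_{\mathrm{c}}$ as fixed linear functionals of $\bar{\boldsymbol{\lambda}}$. The map $(x,y)\mapsto x/(x+y)$ is Lipschitz on $\{x+y \ge c\}$, with constant of order $1/c$. I take $c = \min_t(\bar{\lambda}_{\mathrm{w}}+\bar{\lambda}_{\mathrm{c}})$, which is at least $\sum\mu_{g_{ij}}>0$ because the mean-field intensities dominate $\boldsymbol{\mu}$ whenever the initial data are nonnegative; I assume $\boldsymbol{\mu}\neq 0$. Hence $|B_{\mathrm{inst}}(t) - B^{*(k)}_{\mathrm{inst}}| \le C'_k e^{-\kappa_k(t-\tau_k)}$, so the bias is within any fixed tolerance of its local limit after a time of order $O(\kappa_k^{-1})$, up to the logarithmic factor from $C'_k\|\mathbf{e}(\tau_k)\|$.
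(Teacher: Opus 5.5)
Your proof is correct and is essentially the paper's argument. The paper applies Proposition~\ref{prop:convergence} on each interval: it differentiates the mean-field equation into the affine ODE $\dot{\bar{\boldsymbol{\lambda}}}=(\mathbf{A}^{(k)}-\beta\mathbf{I})\bar{\boldsymbol{\lambda}}+\beta\boldsymbol{\mu}$, passes to the homogeneous error equation, bounds the exponential of the Hurwitz matrix $\mathbf{A}^{(k)}-\beta\mathbf{I}$, and invokes continuity of $B_{\mathrm{inst}}$.

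Your version is more careful than the paper's one-line proof. You make explicit that the history before $\tau_k$ enters only through $\bar{\boldsymbol{\lambda}}(\tau_k)$. You get the real-part bound from $\Re\nu\le|\nu|\le\rho(\mathbf{A}^{(k)})$ without Perron--Frobenius. You keep $\kappa_k$ strictly below $\beta(1-\rho(\mathbf{A}^{(k)}/\beta))$; the paper's final display with the exact rate can fail when the Perron root has a nontrivial Jordan block. Finally, you quantify the bias step through the lower bound $\bar{\boldsymbol{\lambda}}\ge\boldsymbol{\mu}$.
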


\begin{proof}
    The first part follows from applying Proposition \ref{prop:convergence} on each interval $[\tau_k, \tau_{k+1})$. The convergence of $B_{\text{inst}}(t)$ follows from the fact that it is a continuous function of the intensities $\bar{\mathbf{\lambda}}(t)$.
\end{proof}

\paragraph{Implications for algorithmic interventions}
Proposition \ref{prop:convergence-local} provides theoretical insight to interpret the temporal effects of algorithmic interventions on bias. 

Interventions modify exposure dynamics over time, yielding a sequence of excitation matrices $\mathbf{A}^{(k)}$ across policy windows. The spectral radius $\rho(\mathbf{A}^{(k)}/\beta)$, determines whether the dynamics admit a stable regime and how fast transients decay. When $\rho(\mathbf{A}^{(k)}/\beta)<1$,  bias indicators can be meaningfully interpreted once the system has reached its local equilibrium. As the system approaches the critical regime, convergence becomes slow and short-term observations are dominated by transient effects; beyond this threshold, no stable regime is expected, and bias measures may fluctuate indefinitely or diverge, making reliable evaluation difficult. 

Additionally this result implies that stability does not imply bias reduction. For instance, standard LP models tend to reinforce within-group interactions, while fairness-oriented interventions promote cross-group exposure; although they act in opposite directions in terms of bias at equilibrium, both increase the overall level of excitation in the system, pushing the dynamics closer to critical regimes. As a result, interventions designed to reduce bias may require longer observation windows before their long-term effects can be reliably assessed. This also points to a limitation of purely static fairness evaluations in LP: when predictions are fed back into the interaction process, fairness properties become inherently dynamic and may not be captured by metrics computed on a fixed graph.

\section{Experiments}\label{sec:experiment}
The research questions motivating our experiments are:
\begin{itemize}
    \item \textbf{RQ1:} Is the $B_{\text{inst}}$ measure appropriate for capturing the dynamic evolution of homophily in a network?
    \item \textbf{RQ2:} How do different approaches to fairness influence the temporal dynamics of homophily?
    \item \textbf{RQ3:} What is the impact of the retraining frequency of prediction models on the evolution of homophily?
\end{itemize}

We conduct two types of experiments. First, we use simulated networks to validate our $B^*_{inst}$ measure and evaluate various fair LP models, with and without model retraining. This allows us to answer RQ1--RQ3. 

\paragraph{Parameter estimation.} First, we assume that cross-pair excitations are negligible, i.e.\ $\alpha_{(i,j),(k,l)} = 0$ for $(i,j) \neq (k,l)$. This diagonal assumption is motivated by identifiability constraints given the finite observation windows available per pair. Under this simplification, each $\alpha_{(i,j)}$ is estimated via maximum likelihood, maximizing the log-likelihood of the observed event sequence, and the spectral radius reduces to $\rho(A/\beta) = \max_{(i,j)} \alpha_{(i,j)}/\beta$, recovering a closed-form stationary intensity $\bar{\lambda}^*_{ij} = \mu_{ij}/(1 - \alpha_{ij}/\beta)$. The $\beta$ is fixed at $1$ and $\mu$ is estimated via its closed-form MLE $\hat{\mu} = N/T$, where $N$ is the number of observed events over $[0, T]$, following standard practice in the Hawkes process literature~\cite{daley2003introduction}.

\subsection{Simulation Protocol}\label{sec:protocol}

For the simulations, we draw inspiration from stochastic block models (SBMs) to model homophily of choice. We first simulate a pre-network of $N$ nodes partitioned into sensitive groups, where each node is associated with an embedding drawn from a latent cluster structure representing latent characteristics, independent of the sensitive group attribute. At each timestep, active nodes are sampled at random, and three candidate nodes are proposed based on cosine similarity in the embedding space. Link formation probabilities depend on the candidates’ ranking, the SBM group structure, and node popularity. This construction ensures that the initial phase reflects homophily of choice only, without any algorithmic intervention.

After this phase, a LP model is trained on the pre-network to generate more informative embeddings. Depending on the retraining policy, the model can be updated periodically or remain fixed, allowing us to study how algorithmic recommendations, and their potential adaptation over time, interact with the initial network structure. 
This framework allows simulation of the evolution of a social network under the combined influence of its initial structure and recommendations. Details are provided in Appendix \ref{app:simulation_param}.

\subsection{$B^*_{inst}$ for Dynamic Homophily (RQ1)}

First, we highlight the interest of $B^*_{inst}$ compared to the traditional model $B_{cum}$. After generating the pre-network, we run a fair LP model and a regular one (retrained several times) starting from $t=200$. Figure~\ref{fig:results-synthetic} shows that $B_{cum}$ fails to capture the rapid changing homophilic dynamics induced by the models and smooths out the changes over time. In contrast, $B^*_{inst}$ is largely insensitive to whether it is measured throughout the timeline or only during the LP activity period, confirming its reliability as a measure of homophilic dynamics. For all calculations, $\alpha$ is estimated in the retraining windows, and $\mu$ is estimated from the pre-network phase; when limited to the second phase, we set $\mu_w = \mu_c$ to isolate homophilic dynamics.
%
%
We further observe that the fair LP model is less polarizing than the unfair model, which motivates the analysis in the next section.

\subsection{Fairness Interventions and Retraining Dynamics (RQ2 and RQ3)}
We compare several fairness-aware LP methods from the literature on a synthetic graph with three sensitive groups. These include \textbf{CrossWalk} \cite{Khajehnejad_Khajehnejad_Babaei_Gummadi_Weller_Mirzasoleiman_2022}, \textbf{UGE \cite{Wang_2022}}, \textbf{DeBayes} \cite{buyl2020debayes}, and \textbf{FairDrop} \cite{Spinelli_2022}. We also include \textbf{Node2Vec}   \cite{grover2016node2vecscalablefeaturelearning} and a GCN as regular baselines for LP. More details on LP models in Appendix~\ref{app:simulation_param}. 

Table~\ref{tab:3_cluster} and Figure~\ref{fig:results-synthetic} jointly validate $B^*_{inst}$ as a reliable measure of homophilic dynamics. A key distinction motivates our approach: $\Delta DP$ is a model-dependent fairness metric requiring access to 
the model's predictions, whereas $B^*_{inst}$ measures homophily directly from the graph's interaction dynamics, without knowledge of the underlying model. Despite this difference, both measures are consistent across all models: higher $B^*_{inst}$ systematically corresponds to larger $\Delta DP$, confirming that homophilic dynamics observed at the graph level are 
a reliable reflection of model-level fairness outcomes. The Hawkes decomposition further allows us to disentangle the origins of this homophily. The $\mu$ matrix (Appendix~\ref{app:simulation_param}), estimated on the pre-network phase, reflects the baseline structural homophily already present in the graph before any algorithmic intervention. The $\alpha$ matrices then act as a proxy for the model's dynamic influence on the graph: GCN exhibits strongly concentrated diagonal coefficients, indicating strong algorithmic reinforcement of within-group interactions, while fairness-aware models such as FairDrop and DeBayes produce more uniform matrices, reflecting weaker group-level excitation. Notably, FairDrop exhibits slightly elevated off-diagonal $\alpha$ coefficients, yielding a higher spectral radius, which could potentially lead to a long-term explosion of cross-group interactions and warrants monitoring over extended time horizons.

Regarding retraining, the significant increase in $B^*_{inst}$ observed for GCN confirms that, without fairness constraints, periodic model updates amplify homophily in the network.

\begin{table}[tp]
\centering
\caption{AUC, homophily $B^*_{inst}$ and $\Delta DP$ averaged over 10 runs (mean $\pm_{\text{std}}$). \textit{Without retrain}: fixed embedding. \textit{With retrain}: embedding recomputed in middle timeline.}
\label{tab:3_cluster}
\resizebox{0.5\textwidth}{!}{
\begin{tabular}{lcccc|cc}
\toprule
& \multicolumn{4}{c|}{\textit{Without retrain}} & \multicolumn{2}{c}{\textit{With retrain}} \\
\cmidrule(lr){2-5} \cmidrule(lr){6-7}
Model & AUC & AUC 1 & $B^*_{inst}$ & $\Delta DP$ & $B^*_{inst}$ & $\Delta DP$ \\
\midrule
GCN   & $0.84_{\pm 0.01}$ & $0.97_{\pm 0.00}$ & $0.75_{\pm 0.02}$ & $0.37_{\pm 0.04}$ & $0.80_{\pm 0.03}$ & $0.43_{\pm 0.03}$ \\
Crosswalk & $0.84_{\pm 0.01}$ & $0.96_{\pm 0.01}$ & $0.73_{\pm 0.01}$ & $0.35_{\pm 0.02}$ & $0.81_{\pm 0.02}$ & $0.43_{\pm 0.04}$ \\
DeBayes    & $0.85_{\pm 0.01}$ & $0.88_{\pm 0.00}$ & $0.64_{\pm 0.01}$ & $0.18_{\pm 0.02}$ & $0.63_{\pm 0.02}$ & $0.18_{\pm 0.01}$ \\
UGE  & $0.86_{\pm 0.01}$ & $0.88_{\pm 0.01}$ & $0.68_{\pm 0.01}$ & $0.24_{\pm 0.03}$ & $0.71_{\pm 0.02}$ & $0.31_{\pm 0.03}$ \\
Node2Vec  & $0.85_{\pm 0.01}$ & $0.94_{\pm 0.01}$ & $0.63_{\pm 0.02}$ & $0.20_{\pm 0.02}$ & $0.65_{\pm 0.02}$ & $0.20_{\pm 0.02}$ \\
FairDrop      & $0.84_{\pm 0.00}$ & $0.84_{\pm 0.00}$ & $0.63_{\pm 0.01}$ & $0.17_{\pm 0.01}$ & $0.61_{\pm 0.02}$ & $0.17_{\pm 0.01}$ \\

\bottomrule
\end{tabular}
}
\end{table}


\subsection{Real-World Networks}
We consider a subset of the Twitter/X 2021 dataset proposed by \cite{Pournaki_Gaisbauer_Olbrich_2025}, focusing on interactions related to political content. This period is of particular interest as it coincides with an election year, during which polarization dynamics are known to be amplified. To define the sensitive attribute, we infer users' political orientation from hashtag usage, distinguishing coarse-grained ideological groups. The labeling protocol is described in Appendix \ref{app:real-world}.
\begin{figure}[tp]
    \centering
    \includegraphics[width=0.48\textwidth]{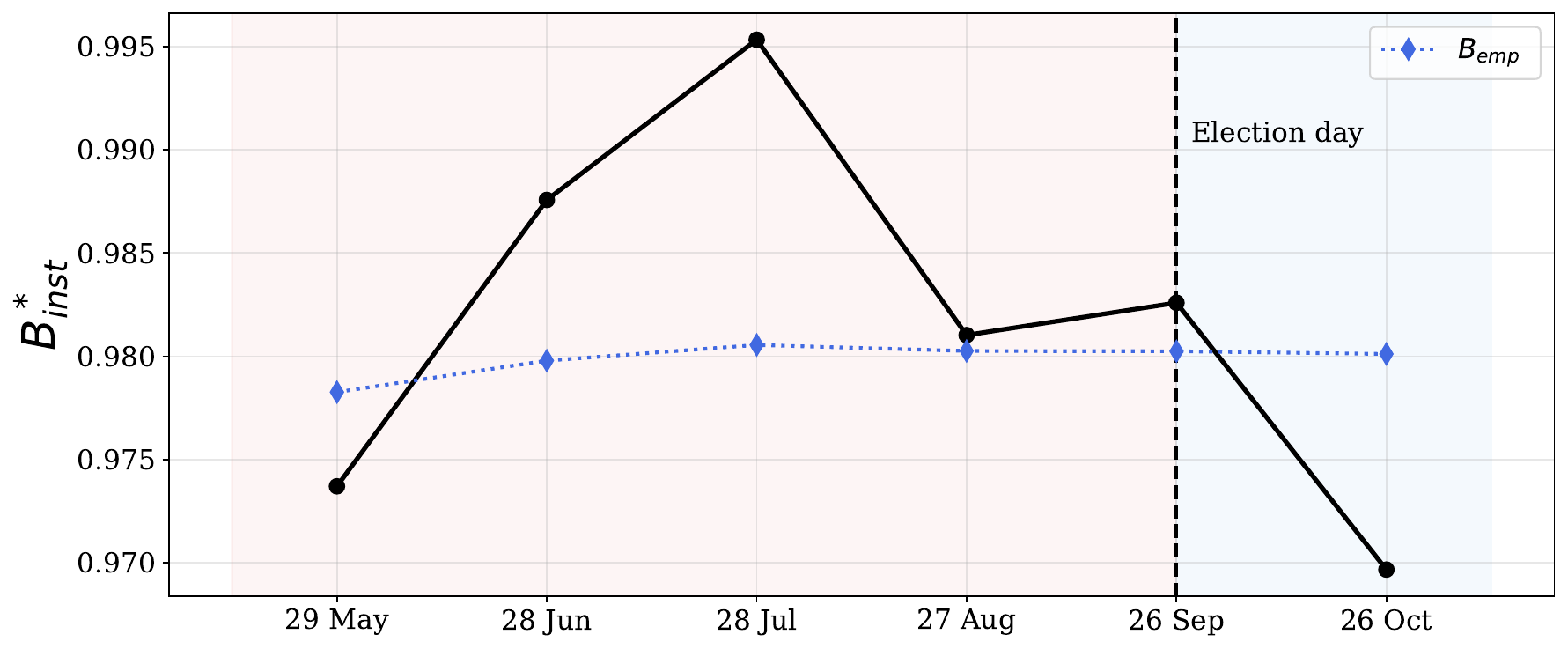}
    \caption{$B^*_{\mathrm{inst}}$  around the German federal election of 2021.}
    \label{fig:election}
\end{figure}

We observe a marked peak of $B^{\mathrm{inst}}$ (Figure \ref{fig:election}) in July~2021, well before the election, consistent with a transient amplification of within-group self-excitation. The high baseline level of $B^{\mathrm{inst}}$ throughout the window indicates that the network is already strongly polarized outside the peak, and that our measure mainly captures variations in reinforcement on top of an existing polarized structure. We do not claim a causal link, but note that this period coincides with heightened public attention during the campaign.

\section{Conclusion and perspectives}
We analyzed how link prediction models, when coupled with network dynamics, shape the evolution of homophily and group-level disparities over time. Modeling interactions with Hawkes processes allows us to separate baseline interaction tendencies from reinforcement effects driven by past interactions and algorithmic exposure, and to define an instantaneous bias measure that captures ongoing structural tendencies beyond cumulative homophily indicators.

Our theoretical results clarify when such bias measures are meaningful in terms of stability and convergence of the induced dynamics, and experiments highlight contrasted homophily trajectories across link prediction strategies and retraining policies. Overall, this suggests that fairness assessments based on static snapshots can be misleading when predictions actively shape future interactions. Relaxing simplifying assumptions on the excitation structure would allow finer-grained modeling of cross-group reinforcement effects. More broadly, our results suggest that designing and evaluating graph learning methods requires explicit attention to their dynamic, feedback-driven impact on network structure.

\newpage
\bibliography{uai2026-template}

\newpage 
\onecolumn

\title{How Predicted Links Influence Network Evolution:\\Disentangling Choice and Algorithmic Feedback in Dynamic Graphs\\(Supplementary Material)}
\maketitle

\appendix

\section{Appendix: Two-Group Illustrative Example}\label{app:example}

We consider a professional network with two demographic groups: group 1 (e.g., men (M)) and group 2 (e.g., women (W)). Interactions in this context correspond to repeated professional exchanges such as endorsements, messages, or recommendations. The mark space reduces to three group pairs: $(1,1)$ (M--M), $(2,2)$ (F--F), and $(1,2)$ (M--F).

\paragraph{Model parameters.}
Baseline intensities reflect a moderate choice homophily: men interact more frequently among themselves ($\mu_{11} = 0.8$), women less so ($\mu_{22} = 0.5$), and cross-gender interactions are rare in the absence of algorithmic intervention ($\mu_{12} = 0.2$). The decay parameter is set to $\beta = 1$.

The non-zero entries of the excitation matrix $\mathbf{A}$ are reported in Table~\ref{tab:phi}. We adopt the sparse neighbourhood structure introduced above: only group pairs sharing a common group index have non-zero excitation. Self-excitation is strongest for M--M interactions ($\alpha_{(1,1),(1,1)} = 0.60$), reflecting the strong tendency of men to engage with other men once a connection is established. Cross-gender self-excitation is weaker ($\alpha_{(1,2),(1,2)} = 0.20$), capturing the limited tendency of cross-gender interactions to self-reinforce without external intervention.

\begin{table}[h]
\centering
\caption{Non-zero entries of the excitation matrix $\mathbf{A}$, 
i.e., excitation strengths $\alpha_{(i,j),(k,l)}$ in the two-group example, with $\alpha_0 = 0.3$ and $\gamma_{\mathrm{rec}} = 0.5$.}
\label{tab:phi}
\renewcommand{\arraystretch}{1.3}
\begin{tabular}{llp{7cm}}
\toprule
\textbf{Kernel} & \textbf{Value} & \textbf{Interpretation} \\
\midrule
$\alpha_{(1,1),(1,1)}$ & $0.60$ & M--M self-excitation: each M--M interaction reinforces future ones \\
$\alpha_{(2,2),(2,2)}$ & $0.40$ & F--F self-excitation: weaker than M--M due to smaller group activity \\
$\alpha_{(1,2),(1,2)}$ & $0.20$ & Cross-gender self-excitation: weak without algorithmic boost \\
$\alpha_{(1,2),(1,1)}$ & $0.10$ & Cross-gender interactions mildly stimulate M--M activity \\
$\alpha_{(1,2),(2,2)}$ & $0.10$ & Cross-gender interactions mildly stimulate F--F activity \\
$\alpha_{(1,1),(1,2)}$ & $0.05$ & M--M interactions weakly stimulate cross-gender ones \\
$\alpha_{(2,2),(1,2)}$ & $0.05$ & F--F interactions weakly stimulate cross-gender ones \\
\bottomrule
\end{tabular}
\end{table}

The resulting matrix $\mathbf{A}$, whose entry
$[\mathbf{A}]_{(i,j),(k,l)} = \alpha_{(i,j),(k,l)}$,
ordered as $(1,1), (2,2), (1,2)$, takes the form:
\[
\boldsymbol{A} =
\begin{bmatrix}
\textcolor{men}{0.60} & 0 & \textcolor{cross}{0.10} \\
0 & \textcolor{women}{0.40} & \textcolor{cross}{0.10} \\
\textcolor{men}{0.05} & \textcolor{women}{0.05} & \textcolor{cross}{0.20}
\end{bmatrix},
\]
where rows correspond to target group pairs and columns to source group pairs. {\color{men} Blue} entries involve M--M dynamics, {\color{women} red} entries F--F dynamics, and {\color{cross} green} entries cross-gender dynamics.
The corresponding kernel matrix is then $\boldsymbol{\Phi}(t) = \mathbf{A}\, e^{-\beta t}$.

\paragraph{Algorithmic scenarios}
We compare a standard scenario that amplifies within-group connections and a 
fairness-aware scenario that boosts cross-group exposure by increasing $E_{12}$. Both satisfy $\rho(\mathbf{A}/\beta) < 1$, ensuring convergence to a well-defined equilibrium. Figure~\ref{app:fig-network} illustrates the resulting network dynamics and equilibrium intensities.

\begin{figure}[h!]
    \centering
    \begin{minipage}[c]{0.62\linewidth}
        \includegraphics[width=\linewidth]{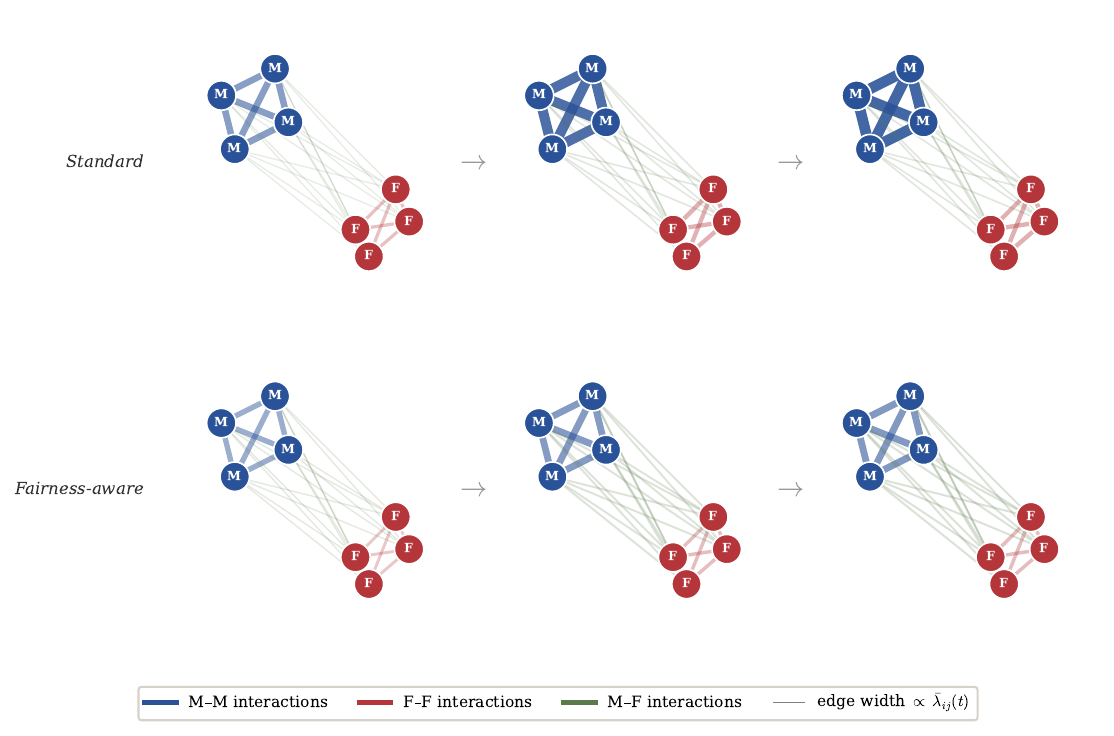}
    \end{minipage}
    \hfill
    \begin{minipage}[c]{0.34\linewidth}
        \includegraphics[width=\linewidth]{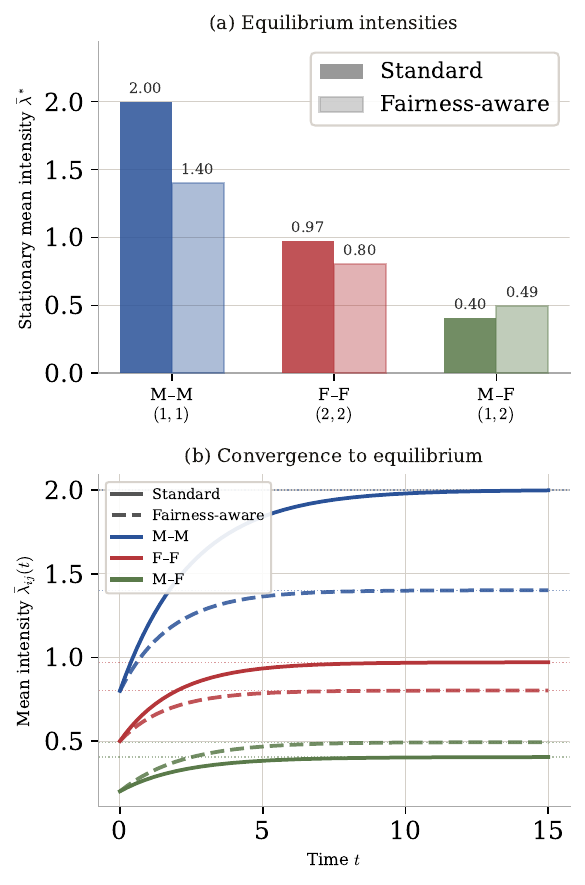}
    \end{minipage}
    \caption{Network evolution and equilibrium intensities under two algorithmic 
    scenarios on a two-group professional network (men M, women F). 
    \emph{Left:} snapshots of the interaction network at three time points, 
    with edge width proportional to the mean-field intensity $\bar{\lambda}_{ij}(t)$. 
    Under the standard scenario (top row, $E_{12} = 0.05$), within-group interactions 
    (blue: M--M, red: F--F) progressively dominate, reflecting the amplification 
    of baseline homophily through self-excitation. 
    Under the fairness-aware scenario (bottom row, $E_{12} = 0.90$), 
    cross-group interactions (green: M--F) are reinforced instead. 
    \emph{Right:} (a) stationary mean intensities $\bar{\lambda}^*$ at equilibrium; 
    (b) mean-field convergence trajectories $\bar{\lambda}_{ij}(t)$. 
    The fairness-aware intervention reduces $\bar{\lambda}^*_{11}$ and 
    increases $\bar{\lambda}^*_{12}$ relative to the standard scenario
    }
    \label{app:fig-network}
\end{figure}





\section{Details on the illustrative simulation}
\label{app:estimation}

\paragraph{Simulation setup.}
We simulate two independent point processes corresponding to within-group interaction streams for groups $w$ and $c$, using Hawkes processes with exponential kernel. Events are generated with Ogata’s thinning algorithm over a time horizon $T=1500$ and decay rate $\beta=1$. Baseline intensities are set to $\mu_w=0.8$ and $\mu_c=0.6$. 

To emulate regime changes in interaction dynamics, excitation parameters are chosen to be piecewise constant with two change points at $t=500$ and $t=1000$. Specifically, for group $w$, the self-excitation coefficient increases in the intermediate phase before decreasing in the final phase, while for group $c$ it initially decreases and then increases, leading to three successive regimes with contrasted levels of endogenous reinforcement. This construction is meant to reflect qualitative shifts in interaction dynamics rather than to model a specific real-world system.
\paragraph{Parameter Estimation}
The Table \ref{tab:values} shows that the estimation successfully recovers both the baseline intensities $\mu$ and the piecewise self-excitation coefficients $\alpha$, with estimates closely matching the true values for each regime.

\begin{table}[h!]
\centering
\caption{Self-excitation coefficients $\alpha$ and baseline intensities $\mu$ for each regime and their estimates.}
\begin{tabular}{c ccc ccc}
\toprule
Group & \multicolumn{3}{c}{True $\alpha$ values} & \multicolumn{3}{c}{Estimated values by window} \\
\cmidrule(lr){2-4} \cmidrule(lr){5-7}
       & Regime 1 & Regime 2 & Regime 3 & Window 1 & Window 2 & Window 3 \\
\midrule
w      & 0.40 & 0.75 & 0.50 & 0.435 & 0.733 & 0.543 \\
c      & 0.20 & 0.15 & 0.50 & 0.240 & 0.198 & 0.490 \\
\bottomrule
\end{tabular}
\label{tab:values}

\vspace{0.2cm}
\begin{tabular}{ccc}
\toprule
Parameter & Estimated value & True value \\
\midrule
$\mu_w$ & 0.717 & 0.800 \\
$\mu_c$ & 0.600 & 0.600 \\
\bottomrule
\end{tabular}
\end{table}

\section{Technical Proof} \label{app:proof}

\begin{proof}
We aim to show that if $\rho(\frac{\mathbf A}{\beta})<1$, the error 
$e(t) = \bar{\boldsymbol{\lambda}}(t) - \bar{\boldsymbol{\lambda}}^*$ decays exponentially and 
satisfies 
\[
\| e(t) \| \le C e^{-\kappa t} \| e(0)\|,
\]
with a rate $\kappa < \beta(1-\rho(\frac{\mathbf A}{\beta}))$. \\

The proof proceeds in two steps: we first reformulate the mean-field dynamics as a linear ODE, then analyze the error dynamics using spectral properties of the excitation matrix.



\paragraph{Step 1: Differential formulation.} 

Starting from Equation \ref{eq:meanfield_matrix_compact}, we differentiate with respect to time using the Leibniz integral rule. Since $\mu$ is constant and the integrand has the form $f(t,s)=Ae^{-\beta(t-s)}\bar{\lambda}(s)$, we obtain 
\begin{align*}
\dot{\bar{\lambda}}(t) &= Ae^{0}\bar{\lambda}(t) + \int_0^t (-\beta)Ae^{-\beta(t-s)}\bar{\lambda}(s)\,ds = A\bar{\lambda}(t) - \beta\left[\bar{\lambda}(t) - \mu\right].
\end{align*}
Rearranging Equation \ref{eq:meanfield_matrix_compact}, we recognize that $\int_0^t Ae^{-\beta(t-s)}\bar{\lambda}(s)\,ds = \bar{\lambda}(t) - \mu$. Substituting this into the above expression yields the linear ODE
\begin{equation}\label{eq:ode}
\dot{\bar{\lambda}}(t) = A\bar{\lambda}(t) - \beta[\bar{\lambda}(t) - \mu] = (A - \beta I)\bar{\lambda}(t) + \beta\mu.
\end{equation}

At equilibrium, $\dot{\bar{\lambda}}^*=0$ implies $(A - \beta I)\dot{\bar{\lambda}}^* + \beta\mu$, which give the well-known stationary mean:
\begin{equation}\label{eq:stationary-mean}
    \bar\lambda^*=(I-A/\beta)^{-1}\mu.
\end{equation}
Note that this invertibility of $(I-A/\beta)$ is guaranteed by the assumption $\rho(A/\beta)<1$, since $||(A/\beta)^n||\rightarrow 0$ ensures the Neumann series $\sum_{n=0}^\infty (A/\beta)^n$ converges.

\paragraph{Step 2: Exponential decay of the error.} Starting from
\begin{equation} \label{eq:deriv_hawkes}
\dot{\bar{\lambda}}(t) 
= (A - \beta I) \, \bar{\lambda}(t) + \beta \, \mu,
\end{equation}
we can define the error $e(t) = \bar{\lambda}(t)- \bar{\lambda}^*$. Differentiating and using Equations \ref{eq:ode} and \ref{eq:stationary-mean}, we obtain 

\begin{align*}
    \dot{e}(t) & = \dot{\bar{\lambda}}- \dot{\bar{\lambda}}^* = \dot{\bar{\lambda}} \\
    & = (A - \beta I) \bar{\lambda}(t) + \beta \mu \\
    &= (A - \beta I)(e(t) + \bar{\lambda}^*) + \beta \mu\\
    &=(A- \beta I)e(t) + \underbrace{(A - \beta I)\bar{\lambda}^* + \beta \mu}_{=0}.
\end{align*}

Thus, the error statisfies the homogeneous linear system:

\begin{equation}
\dot e(t) = (A - \beta I)\, e(t), \qquad e(0)=\bar\lambda(0)-\bar\lambda^*.
\end{equation}

The solution is $e(t) = \text{exp}[{(A-\beta I)t}] e(0)$. To bound the matrix exponential, we analyse the spectrum of $A-\beta I$. If $v$ is an eigenvector of $A/\beta$ with eigenvalues $\nu$, then:

$$(A-\beta I)v=Av-\beta v=\beta \nu v-\beta v= \beta(\nu-1)v.$$

Hence the eigenvalues of $A-\beta I$ are $\{\beta(\nu_i-1)\}$ where $\{\nu_i\}$ are eigenvalues of $A/\beta$. Since $\rho(A/\beta)<1$, we have $|\nu_i|\leq\rho(A/\beta)<1$ for all $i$, which implies: 
$$\Re[\beta(\nu_i-1)]=\beta(\Re(\nu_i)-1)<0,$$
making $A-\beta I$ a Hurwitz matrix. By standard results on linear ODEs, there exist constants $C,\kappa>0$ such that 
$$
||\text{exp}[(A-\beta I)t]||\leq Ce^{-\kappa t}, \quad \forall t\geq 0,
$$
with $\kappa<-\text{max}_i\Re[\beta(\nu_i-1)]=\beta(1-\text{max}_i\Re(\nu_i)).$
When $A/\beta$ is nonnegative (as is natural in our setting), the Perron-Frobenius theorem guarantees that $\rho(A/\beta)=\text{max}_i\Re(\nu_i)$, yielding:
$$\kappa<\beta(1-\rho(A/\beta)).$$
Consequently,
\begin{equation}
\|\bar{\boldsymbol{\lambda}}(t) - \bar{\boldsymbol{\lambda}}^*\|
\le C e^{-\beta(1-\rho(A/\beta)) t}
\|\bar{\boldsymbol{\lambda}}(0) - \bar{\boldsymbol{\lambda}}^*\|,
\end{equation}
which can be written as $||e(t)||\leq Ce^{-\kappa t}||e(0)||$, completing the proof.

\end{proof}

Figure~\ref{fig:proposition3} illustrates this result empirically: 
across three successive intervals with different excitation matrices, 
the normalized distance to the local equilibrium consistently stays 
below the theoretical exponential bound, confirming that the 
convergence rate $\kappa_k$ is well captured by $\beta(1-\rho(\mathbf{A}^{(k)}/\beta))$.
\begin{figure}[h]
    \centering
    \includegraphics[width=0.83\linewidth]{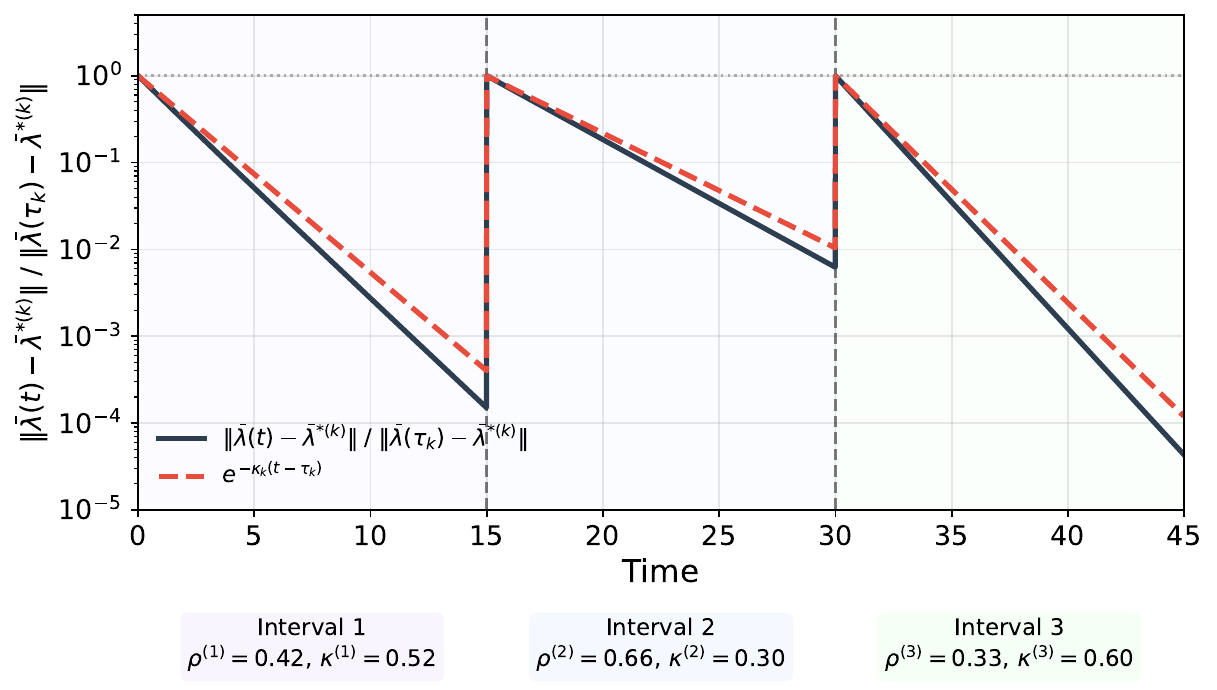}
    \caption{Empirical verification of Proposition~\ref{prop:convergence}.
    The normalized distance $\|\bar{\boldsymbol{\lambda}}(t) - \bar{\boldsymbol{\lambda}}^{*(k)}\|
    / \|\bar{\boldsymbol{\lambda}}(\tau_k) - \bar{\boldsymbol{\lambda}}^{*(k)}\|$
    (solid line) remains below the exponential bound $e^{-\kappa_k(t - \tau_k)}$
    (dashed line) on each interval $[\tau_k, \tau_{k+1})$, with
    $\kappa_k = 0.9\,\beta(1 - \rho(\mathbf{A}^{(k)}/\beta))$.
    The excitation matrix $\mathbf{A}^{(k)}$ changes twice, inducing
    a new local equilibrium $\bar{\boldsymbol{\lambda}}^{*(k)}$ and a
    fresh convergence phase at each transition.}
    \label{fig:proposition3}
\end{figure}

\section{Additional details on Experiments}
\subsection{Simulation on Synthetic Data}\label{app:simulation_param}
\paragraph{Simulation parameter and procedure} We present in Table \ref{tab:simulation_params} the hyper-parameters present in our simulation and their role. In addition, Algorithm \ref{alg:pre_network} details the step to generate the temporal network used as a starting point for our experiments. 
\begin{table}[h]
\centering
\caption{Hyperparameters used in the simulation.}
\label{tab:simulation_params}
\footnotesize
\begin{tabular}{lll}
\hline
\textbf{Concept} & \textbf{Parameter} & \textbf{Description} \\
\hline
Network size & $N = 300$ & Total number of nodes \\
Sensitive groups & \texttt{groups} & Node group assignments \\
Group link probability & \texttt{prob\_matrix} & Matrix of link formation probabilities between group pairs \\
Node activity & \texttt{activity\_rate} & Probability a node is active at each time step \\
Popularity influence & \texttt{popularity} & Initial popularity score of each node in link formation \\
Retrain policy & $L$ & Frequency of link prediction updates (periodic or single) \\
Recommendations & \texttt{top\_probs} & Probabilities for the top $k$ recommended candidates \\
\hline
\end{tabular}
\end{table}

\begin{algorithm}[H]
\caption{Pre-network Simulation}
\label{alg:pre_network}
\begin{algorithmic}[1]
\Require $G$, $prob\_matrix$, $top_{probs}$, $popularity$
\State $\mathbf{e}_i \gets \textsc{GetSocialEmbeddings}(nodes)$ \Comment{Latent cluster embeddings, independent of sensitive groups}
\For{$t = 0$ \textbf{to} $T$}
    \State $active\_nodes \gets$ nodes activated according to $activity\_rate$
    \ForAll{$i \in active\_nodes$}
        \State $candidates \gets$ nodes not connected to $i$
        \State $\text{sim}_{ij} \gets \frac{\mathbf{e}_i \cdot \mathbf{e}_j}{\|\mathbf{e}_i\|\|\mathbf{e}_j\|}$ for all $j \in candidates$
        \State $\text{probs} \gets \text{softmax}(\text{sim} - \min(\text{sim}))$ \Comment{Weighted sampling}
        \State $top\_idx \gets$ sample 3 indices without replacement with $p = \text{probs}$
        \For{$rank, idx \in top\_idx$}
            \State $j \gets candidates[idx]$
            \State $p_{\text{sbm}} \gets prob\_matrix[g_i][g_j]$
            \State $acceptance \gets popularity[j]$
            \State $p \gets (p_{\text{sbm}} + top\_probs[rank]) \cdot acceptance$
            \If{$\text{rand()} < p$}
                \State Add edge $(i,j)$ to $G$ with timestamp $t$
            \EndIf
        \EndFor
    \EndFor
\EndFor
\State \Return $G, edges, \mathbf{e}$
\end{algorithmic}
\end{algorithm}

\paragraph{Link prediction models.}
We provide technical details on the link prediction models used in the synthetic experiments.

As non-fairness-aware baselines, we consider node2Vec and a Graph Convolutional Network (GCN). The GCN serves as a backbone encoder to learn node representations from the observed graph, which are then used for link prediction. We additionally consider several fairness-aware methods that operate at different stages of the pipeline (random-walk biasing, graph pre-processing, and debiasing at the representation level). 

\begin{itemize}
    \item \textbf{CrossWalk}~\citep{Khajehnejad_Khajehnejad_Babaei_Gummadi_Weller_Mirzasoleiman_2022} extends random-walk-based node embedding methods (such as node2Vec) by biasing the sampling of random walks so as to encourage transitions across group boundaries. This is achieved by assigning higher weights to edges that lie close to group peripheries or that connect nodes from different sensitive groups, thereby reducing the tendency of walks to remain within the same demographic group.

    \item \textbf{UGE}~\citep{Wang_2022} builds on a GCN backbone and enforces fairness at the representation level by encouraging the learned node embeddings to be invariant with respect to the sensitive attribute. This is achieved through an additional regularization or adversarial objective that discourages the encoder from encoding group-specific information. 

    \item \textbf{FairDrop}~\citep{Spinelli_2022} is a pre-processing strategy that modifies the adjacency matrix during training in order to compensate for homophily with respect to the sensitive attribute. The method relies on a biased edge dropout mechanism: at each training step, edges are removed according to a randomized response scheme that assigns higher removal probabilities to edges connecting nodes with the same sensitive attribute value, thereby attenuating within-group reinforcement in the observed graph.
    \item \textbf{DeBayes}~\citep{buyl2020debayes} is a fairness-aware adaptation of Conditional Network Embedding (CNE)~\citep{kang2019conditional}, a Bayesian framework that incorporates prior knowledge about the network structure (e.g., density, degree distribution, or block structure) through an explicit prior distribution. In DeBayes, sensitive group information is encoded in the prior, such that the learned embeddings are encouraged to be independent of the sensitive attribute while still capturing the underlying structural properties of the graph.
\end{itemize}

For all methods producing node embeddings, link prediction is performed using a cosine similarity-based decoder.
Specifically, for each active node $i$, candidate neighbors are scored by their cosine similarity with $i$ in the embedding space. These scores are then passed through a softmax with temperature $\tau = 0.1$ to obtain a probability distribution over candidates, from which the top-3 candidates are sampled without replacement. This choice ensures that differences in observed behavior across methods primarily stem from the learned representations rather than from the expressiveness of the link prediction head. For GCN-based models, the decoder is applied to the final-layer node embeddings.

Since our objective is not to compare predictive performance but to analyze the qualitative impact of these methods on structural bias and homophily dynamics, we use the default hyperparameters from the original implementations for all methods. For GCN-based models (GCN and UGE), we use the same number of layers, hidden dimensions, and optimizer settings, so that observed differences can be attributed to the fairness mechanisms rather than architectural choices.

\paragraph{Additional results}
Figure \ref{fig:mu} shows the baseline intensity matrix $\hat{\mu}$, estimated on the pre-network phase. The diagonal values confirm the presence of structural homophily before any algorithmic intervention.

\begin{figure}[t]
\centering
\subfloat[\label{fig:mu}]{
  \includegraphics[width=0.23\linewidth]{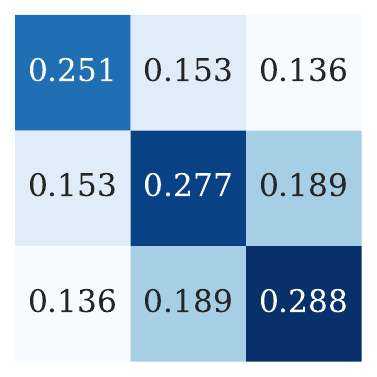}
}\hfill
\subfloat[\label{fig:alpha_noretrain}]{
  \includegraphics[width=0.7\linewidth]{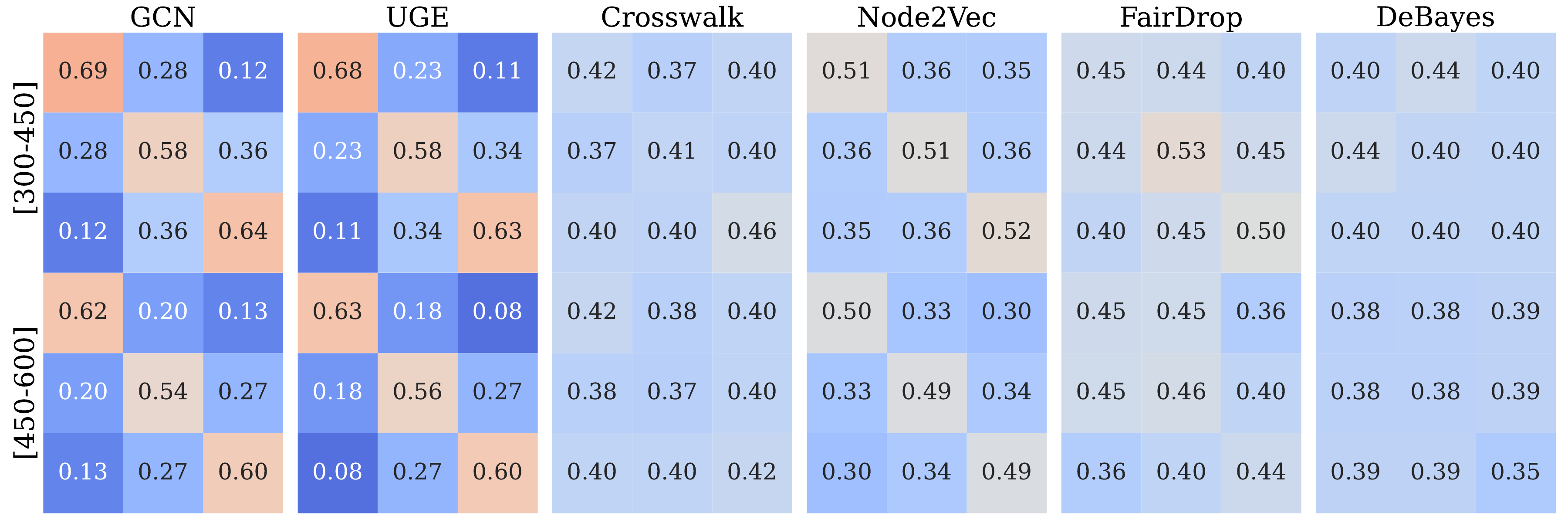}
}
\caption{Estimated Hawkes parameters: (a) $\mu$ estimated; (b) Estimated excitation matrices $\alpha$ over two windows [300--450] and [450--600].}
\label{fig:hawkes_params}
\end{figure}

Figure~\ref{fig:alpha_noretrain} shows the matrices $\alpha$ estimated over two successive windows [300–450] and [450–600], as part of retraining. Compared to the single-window estimates in Figure~\ref{fig:results-synthetic}, the values are significantly higher, which is normal: shorter windows capture a more concentrated signal immediately after each model update, while a single long window dilutes the instantaneous influence of the model over time. GCN and UGE consistently show diagonal coefficients that are highly concentrated on both windows ($0.58$--$0.69$), confirming the persistent algorithmic reinforcement of interactions within the group. 
Crosswalk and DeBayes produce nearly uniform matrices ($0.38$--$0.46$), reflecting low and stable group-level excitation. The models remain largely consistent between the two windows for all models, suggesting that homophilic dynamics stabilize quickly after each intervention. These observations raise a broader question concerning the temporal granularity at which fairness should be evaluated. A coarse measurement window may mask short-term spikes in homophilic excitement that occur immediately after a model update, while a window that is too fine may capture transient noise rather than structural trends. 

\subsection{Social Network (Twitter/X) Application}\label{app:real-world}

\paragraph{Data extraction} We use a retweet network extracted from the dataset released by \cite{Pournaki_Gaisbauer_Olbrich_2025}, covering the year 2021 and the German context. Nodes correspond to user accounts and an edge represents a retweet interaction between two accounts. The original dataset includes topic modeling over tweet content; we restrict our analysis to interactions associated with the german political topic.
The resulting graph contains $70{,}384$ nodes and $581{,}443$ edges.

\paragraph{Labelisation procedure}
To identify the political orientation of users, we adopted a three-step approach. First, keywords characteristic of each political camp were selected to define \textit{anchor} accounts (\textit{seeds}): the most active users on these terms serve as known political reference points. Second, a retweet graph 
is constructed and submitted to the Louvain community detection algorithm, which blindly identifies groups of heavily interconnected users. Finally, the detected communities are labeled by counting the seeds from each political side they contain: the community predominantly populated by CDU seeds is 
tagged \textit{right-wing}, while the one populated by SPD seeds is tagged \textit{left-wing}.

We emphasize that this procedure provides only a reductive proxy for political orientation. Our goal here is not to draw substantive conclusions about political behavior or electoral dynamics, but rather to rely on real-world interaction data to assess whether the proposed model can be learned on non-synthetic networks exhibiting meaningful temporal and structural patterns. The list of seed keywords was established in collaboration with the AI assistant Claude in Table \ref{tab:seeds}.

\begin{table}[h]
\centering
\caption{Keywords used to initialise the political seeds.}
\label{tab:seeds}
\begin{tabular}{ll}
\hline
\textbf{Camp} & \textbf{Keywords} \\
\hline
CDU (right-wing) & \texttt{laschet}, \texttt{maassen}, 
                   \texttt{cduwahlprogramm}, \texttt{union} \\
SPD (left-wing)  & \texttt{scholz}, \texttt{brueckenlockdown}, 
                   \texttt{pimmelgate}, \texttt{spd} \\
\hline
\end{tabular}
\end{table}

\begin{figure}[t]
\centering
\subfloat[\label{fig:mu_germany}]{
\adjustbox{valign=t}{\includegraphics[width=0.23\linewidth]{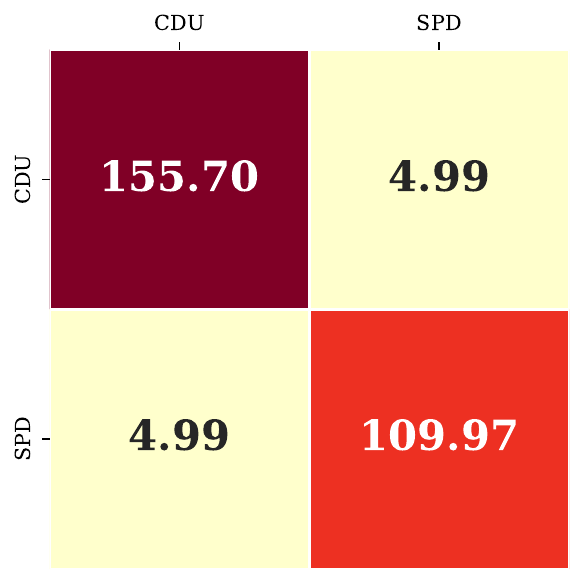}}
}\hfill
\subfloat[\label{fig:alpha_germany}]{
\adjustbox{valign=t}{\includegraphics[width=0.7\linewidth]{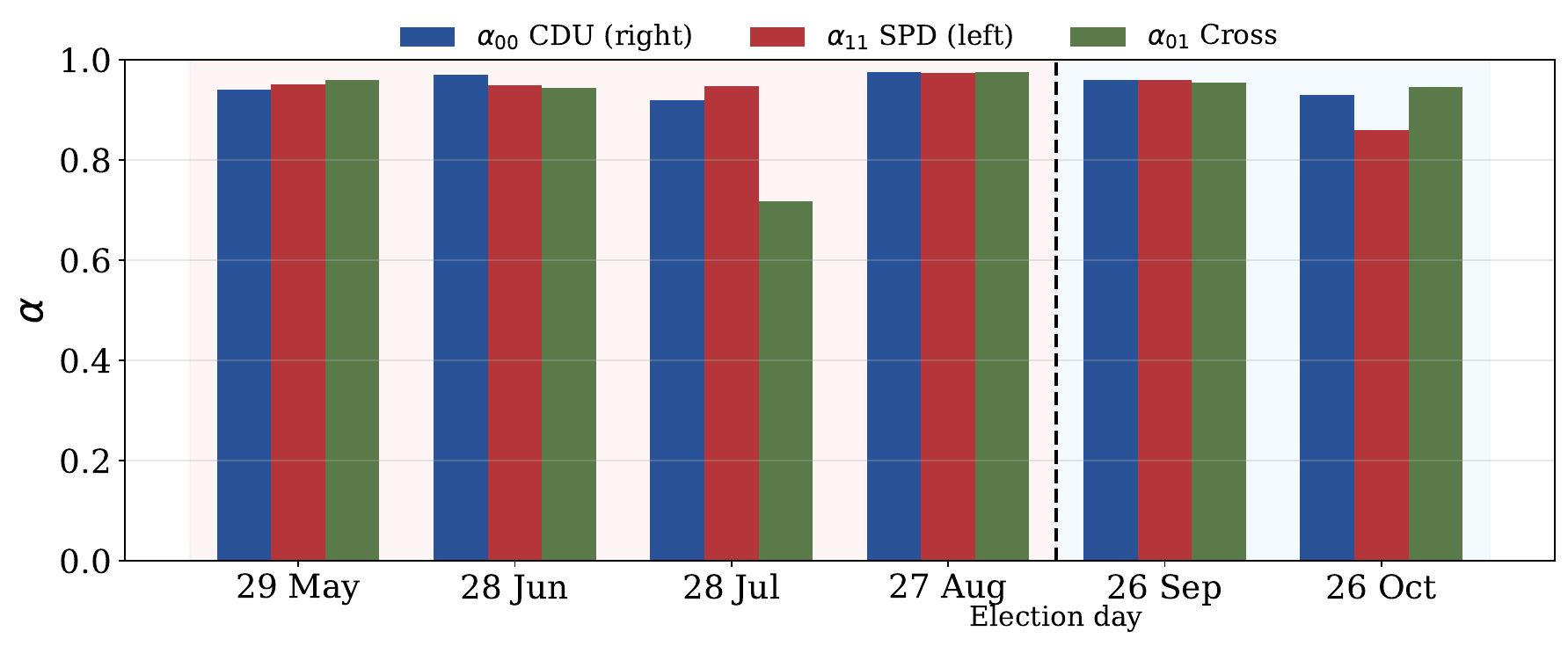}}
}
\caption{Estimated parameters for the German network: (a) $\mu$ estimated; (b) Excitation coefficients computed on the tweet extraction from May to Oct. 2021.}
\label{fig:hawkes_params_germany}
\end{figure}

\paragraph{Additional Results}
Figure~\ref{fig:alpha_germany} shows that the estimated coefficients $\alpha$ are consistently high for all windows, reflecting strong self-excitation within both political camps. Figure~\ref{fig:mu_germany} shows the reference intensity matrix $\mu$ : the cross-group values are about 30 times higher than the within-group term, confirming that the network exhibits structural polarization. These two observations are consistent with the echo chamber effect characteristic of retweet-based platforms, where information circulates mainly within ideologically homogeneous communities. A drop in $\alpha_{01}$ is observed in the July window, coinciding with the Rhineland floods of mid-July. This transient reduction in cross-camp excitation may reflect diverging reactions to the disaster within each political camp, though caution is warranted in drawing causal conclusions from a single window. After the election (26 Sep, 26 Oct), $\alpha$ values remain elevated and stable across all pairs.

\end{document}